\newtheorem{Thm}{Theorem}[section]
\newtheorem{Cor}[Thm]{Corollary}
\newtheorem{Lem}[Thm]{Lemma}
\newtheorem{Rem}[Thm]{Remark}
\newtheorem{Ex}[Thm]{Example}
\newtheorem{Not}[Thm]{Notation}
\newcommand{\N}{\mathbb{N}}
\newcommand{\R}{\mathbb{R}}
\newcommand{\E}{\mathbb{E}}
\newcommand{\C}{\mathbb{C}}
\newcommand{\pr}{\text{Pr}\,}
\DeclareMathOperator{\argmin}{argmin}
\newcommand{\1}{\mathbbm 1}
\newcommand{\ds}{\displaystyle}
\newcommand{\ra}{\rightarrow}
\def\tr{\operatorname{tr}\, }
\def\span{\operatorname{span}}
\newcommand{\ten}{\otimes}
\newcommand{\mc}{\mathcal}
\newcommand{\bma}{\boldmath}
\newcommand{\X}{\mathbf{X}}
\newcommand{\Y}{\mathbf{Y}}
\newcommand{\opt}{\text{opt}}
\newcommand{\lmax}{\ell_{\text{max}}}
\newcommand{\final}{\text{final}\,}
\newcommand{\pruned}{\text{pruned}\,}
\begin{document}

\title[Quantum Data Compression using Dynamical Entropy]{Optimality in Quantum Data Compression using Dynamical Entropy}

\author{George Androulakis}
\email{giorgis@math.sc.edu}	
%\affiliation{Department of Mathematics, University of South Carolina,
%	Columbia, SC 29208 U.S.A.}

\author{Duncan Wright}
\thanks{This work is part of the second author's PhD thesis.}
\thanks{The second author was partially supported by a SPARC Graduate Research Grant from the Office of the Vice President for Research at the University of South Carolina.}
%\altaffiliation{This work is part of the second author's PhD thesis.}
\email{dwright2@wpi.edu}
%\affiliation{Department of Mathematics, University of South Carolina,
%	Columbia, SC 29208 U.S.A.}

\date{\today}

\begin{abstract}
In this article we study lossless compression of strings of pure quantum states of indeterminate-length quantum codes which were introduced by Schumacher and Westmoreland. 
Past work has assumed that the strings of quantum data are prepared to be encoded in an independent and identically distributed way. 
We introduce the notion of quantum stochastic ensembles, allowing us to consider strings of quantum states prepared in a general way. 
For any quantum stochastic ensemble we define an associated quantum dynamical system and prove that the optimal average codeword length via lossless coding is equal to the quantum dynamical entropy of the  associated quantum dynamical system. 
\end{abstract}

\keywords{Kraft-McMillan inequality, quantum dynamical system, quantum Markov chain, quantum dynamical entropy}
\subjclass[2010]{Primary: 81P45. Secondary: 81P70, 94A15, 37A35 }
\maketitle

\section{Introduction}

In the theory of data compression of classical information theory one wishes to encode a symbol set, $S$, with a code, $C$, which is a mapping from the symbol set $S$ to the set  $A^+$ of all finite strings (or sequences) of elements from the alphabet $A$,  where $A$ is usually taken to be the binary alphabet $\{ 0, 1\}$. 
The set $A^+$ is frequently referred to as the codebook and its elements are called codewords. 
Since we compress long strings (sequences) of messages, concatenation is used to extend the code $C$ to the set $S^+$ containing all finite strings from the symbol set $S$. 
This extension of $C$ is denoted by $C^+$ and it is called the extended code. 
A code $C$ is said to be uniquely decodable if its extended code is an injective function. 
In that case, the decoding function is the inverse of $C^+$. 
If each symbol $x$ of the symbol set $S$ that we wish to encode is always prepared with the same probability $p(x)$, independent of the string of symbols that have appeared earlier, then the sequence 
$(X_n)_{n \in \mathbb{N}}$ of random variables which gives us the string of symbols to be encoded 
is independent and identically distributed (i.i.d.) with values in the symbol set $S$ with probability mass function equal to 
$(p(x))_{x \in S}$. 
If $X$ denotes any member of this sequence of random variables then its Shannon entropy $H(X)$ is defined as 
$$
H(X)= - \sum_{x \in S} p(x) \log_2 p(x) .
$$

In Shannon's original works on the subject (\cite{Shannon48a, Shannon48b}), the Noiseless Coding Theorem was proved which states that, for any $\delta>0$, $(H(X)+\delta)$-many binary bits per symbol are sufficient in order to encode strings of symbols if each entry of the sequence is prepared in a i.i.d.\
 way, with probability of error tending to zero as the length of the strings tend to infinity. 
Moreover, Shannon showed that for any $R<H(X)$, if at most $R$ bits are used per symbol, then the probability of error tends to 1 as the length of the strings tend to infinity. 
Thus the Shannon entropy $H(X)$ can be interpreted as the 
minimum expected number of binary bits per symbol that are necessary in order to encode strings of symbols with arbitrarily small error (i.e. \ asymptotically lossless coding)
given that the elements of the string of symbols are encoded in an i.i.d.\ way.

The setting of quantum data compression for indeterminate-length quantum codes is similar to the setting of classical data compression. In this case, the symbol set $\mathcal{S}$ contains the symbol states which are normalized vectors spanning a Hilbert space $H_\mathcal{S}$. 
Here we only consider the compression of pure quantum states, therefore we restrict our attention to 
normalized vectors or pure states. 
The classical binary alphabet $A=\{0,1\}$ is replaced by the set of qubits $\mathcal{A}= \{ |0 \rangle , | 1\rangle \}$ which is the standard orthonormal basis of the Hilbert space $\mathcal{H}_\mathcal{A}=\mathbb{C}^2$. 
The classical codebook $A^+$ is replaced by the free Fock space $H_\mathcal{A}^\oplus = \oplus_{\ell=0}^\infty H_\mathcal{A}^{\otimes \ell}$. 
A quantum code is a linear isometry $U: H_\mathcal{S} \to H_\mathcal{A}^\oplus$, and the corresponding extended code is a map $U^+$ which is defined on the free Fock space 
$H_\mathcal{S}^\oplus=\oplus_{\ell=0}^\infty H_\mathcal{S}^{\otimes \ell}$ by ``concatenation;" i.e.\ tensor 
products of the values of $U$ in the free Fock space 
$H_\mathcal{A}^\oplus$. 
The quantum code $U$ is called uniquely decodable if $U^+$ is also an isometry.

The Noiseless Coding Theorem was extended to indeterminate-length quantum codes in 1995 by Schumacher \cite{Schumacher95}. 
Schumacher showed that, for any $\delta>0$, 
$(S(\rho)+\delta)$-many qubits per symbol are sufficient in order to encode strings of symbol states if each entry of the sequence is prepared in a 
i.i.d.\ way, with probability of error tending to zero as the length of the strings tends to infinity. 
Here $\rho=\rho_{\mc S}$ is the ensemble state representing the quantum ensemble $\mc S$, and $S(\rho)$ is the von~Neumann entropy of the density matrix $\rho$ given by 
$$
S(\rho)= -\text{tr}\, ( \rho \log_2\rho ) .
$$
Moreover, Schumacher showed that for any $R<S(\rho)$, if at most $R$ qubits are used per symbol, then the probability of error tends to 1 as the length of the strings tends to infinity. 
Thus the von Neumann entropy $S(\rho)$ can be interpreted as the 
minimum expected number of qubits per symbol that are necessary in order to encode strings of symbol states with arbitrarily small error (i.e.\ asymptotically lossless coding) given that the elements of the string of symbol states are prepared in an i.i.d.\ way.

Indeterminate-length quantum codes were considered by Schumacher and Westmoreland in \cite{SW01}, and later by M\"uller, Rogers and Nagarajan in \cite{MR08, MRN09}; and Bellomo, Bosyk, Holik and Zozor in \cite{BBHZ17}. 
In all three of these papers, the authors prove a version of the quantum Kraft-McMillan Theorem which states that every uniquely decodable quantum code must satisfy an inequality in terms of the lengths of its eigenstates. 
Their presentations are very similar to that of the classical 
Kraft-McMillan Theorem (\cite[Theorems 5.2.1 and 5.5.1]{CT91}) except that these authors did not provide a converse statement. 
In Theorem~\ref{quantum Kraft}, we present a modified version of the quantum Kraft-McMillan Theorem giving a converse statement, thus characterizing the uniquely decodable quantum codes. 
Our Theorem~\ref{quantum Kraft} comes in handy when we define an optimal quantum code that corresponds to a given ensemble. 

In Subsection~\ref{Q Data Compression} we introduce the 
notion of quantum stochastic ensemble and Markov ensemble, 
allowing us to prepare strings of symbol states for 
quantum data compression such that the appearance of each 
symbol in the string may depend on the previous symbols; 
i.e.\ the strings of symbol states are not necessarily 
prepared in an i.i.d.\ way. 
Quantum sources that emit sequences of quantum symbols 
that are not necessarily statistically independent have 
been considered in the literature \cite{KingL98} and they 
are well suited for quantum communications.
A stochastic ensemble is a sequence $(\mathcal{S}^k)_{k \in \mathbb{N}}$, where $\mathcal{S}^k = \{ p(n_1,\ldots, n_k) , |s_{n_1}\cdots s_{n_k} \rangle \}_{n_1,\ldots, n_k=1}^N$ for each $k\in\N$ such that $p$ is the probability mass function of a discrete stochastic process $\X$, $\{|s_n\rangle\}_{n=1}^N$ is a collection of vector states referred to as the symbol states and $p(n_1,\ldots, n_k)$ is the probability that the string of quantum symbols $|s_{n_1}\cdots s_{n_k}\rangle$ is encoded, for each $k\in\N$ and $n_1,\ldots, n_k\in\{1,\ldots, N\}$.   

Our main results, Theorems~\ref{main theorem} and \ref{main theorem 2}, give quantum dynamical entropy interpretations for the average minimum codeword length per symbol as the length of strings of symbol states tend to infinity when the coding is assumed to be lossless. 
These results extend the result of Schumacher \cite{Schumacher95} and Bellomo et al.\ \cite{BBHZ17} which state that for an i.i.d.\ prepared quantum ensemble the optimal codeword length per symbol is equal to the von Neumann entropy of the initial ensemble state for asymptotically lossless coding. 
In our result we use the quantum Markov chain (QMC) approach to quantum dynamical entropy which we recall in Subsection~\ref{sect aow}. The notion of QMC was introduced by Accardi in \cite{Accardi75} and its use for describing dynamical entropy was first appeared in \cite{AOW97} in terms of the Accardi-Ohya-Watanabe (AOW) entropy. Another QMC approach was introduced by Tuyls in \cite{Tuyls98} for the study of the Alicki-Fannes (AF) entropy, which was introduced in \cite{AF94} and often referred to as ALF entropy to emphasize Lindblad's contributions. Finally, a generalization of both QMC approaches was given in \cite{KOW99}, where the authors introduced the Kossakowski-Ohya-Watanabe (KOW) entropy. Throughout this article, we will follow mainly the terminology and notations of \cite{AOW97} and \cite{KOW99}.

\section{Data Compression}\label{applications}

In what follows, all codings will be done into strings of bits or strings of qubits for classical and quantum codes, respectively. Therefore all codewords will be strings of elements from a binary alphabet $A=\{0,1\}$ (in the classical case) or, possibly the superposition of, strings from a quantum binary alphabet $\mc A=\{|0\rangle, |1\rangle\}$ which is an orthonormal basis of the Hilbert space $H_{\mc A}=\C^2$ (in the quantum case). The extensions to $d$-bits or $d$-qubits can easily be done in both cases. 

\subsection{Classical Codes and the Kraft Inequality}

Let $S$ be a finite or countable set equipped with the power set $\sigma$-algebra $\mc P(S)$, and let $X$ be a random variable with values in $S$. 
The set $S$ will be referred to as the \textbf{symbol set} that we wish to encode. 
In the literature, the set $S$ is referred to as the set of objects, the message set, or sometimes even the index set. 
For any set $Y$, we will set $Y^+$ equal to the set $\cup_{\ell=0}^\infty Y^{\ell}$ which is the collection of all possible finite strings from $Y$, where $Y^0$ denotes the empty set (or empty string). 
Lastly, let $A=\{0,1\}$ be the binary alphabet. 
A \textbf{code} $C:S\ra A^+$ is a mapping from $S$ to $A^+$, the set of finite strings with letters in the binary alphabet $A$. 
The range of the code, $A^+$, is referred to as the \textbf{codebook} and its elements are the \textbf{codewords}. 
Moreover, for each $x\in S$, we refer to $C(x)$ as the \textbf{codeword of the symbol \bma{$x$}}. 
For each $a\in A^+$, we call the \textbf{length of \bma{$a$}} (denoted by $\ell(a)$) the unique integer $m$ such that $a\in A^m$.

The \textbf{expected length} of a code $C$ on a symbol set $S$ is given by 
\begin{equation*}\label{expected length}
EL(C):= \sum_{x\in S}p(x)\ell(C(x))=\E[\ell (C (X))],
\end{equation*}

\noindent where $p:S\ra [0,1]$ is the probability mass function (pmf) of the random variable $X$ and the expectation $\E$ is taken with respect to $p$.

We extend the code $C$ by concatenation to obtain the \textbf{extended code}, also called the extension of $C$, $C^+: S^+\ra A^+$. That is to say 
\begin{equation*}\label{extended code}
C^+(x_1x_2\cdots x_n)=C(x_1)C(x_2)\cdots C(x_n)\quad \text{for all }x_1x_2\cdots x_n\in S^n\text{ and } n\in\N, 
\end{equation*}

\noindent and we define $C^+(\emptyset)=\emptyset$. 
We call the code $C$ \textbf{uniquely decodable} whenever its extension $C^+$ is injective; i.e.\ $C$ is uniquely decodable whenever all strings of symbols from $S$ are pairwise distinguishable. 
In lossless coding we are only interested in uniquely decodable codes. 

An extremely useful class of uniquely decodable codes are the so-called \textbf{instantaneous (or prefix-free)} codes. A code is said to be prefix-free if no codeword is the prefix of another; i.e.\ for every distinct pair $x,y\in S$ there is no $a\in A^+$ such that $C(x)a=C(y)$. Prefix-free codes are called instantaneous because the decoder is able to read out each codeword from a string of codewords, instantaneously, as soon as she sees that word appear in a string (without waiting for the entire string). 

The Kraft-McMillan Inequality is fundamental in classical data compression.

\begin{Thm}(Kraft-McMillan Inequality, \text{\cite[Theorems 5.2.1 and 5.5.1]{CT91}})\label{Kraft}
For any uniquely decodable code over a symbol set $S$ with cardinality $|S|=m\in \N$, the codeword lengths $\ell_1,\ell_2,\ldots, \ell_m$ must satisfy the inequality $$\sum_{i=1}^m 2^{-\ell_i}\le 1.$$
Conversely, given a set of codeword lengths that satisfies this inequality, there exists an instantaneous code with these code lengths. 
\end{Thm}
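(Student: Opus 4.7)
The plan is to prove the two directions separately, beginning with necessity. For any uniquely decodable code $C$ with codeword lengths $\ell_1,\ldots,\ell_m$, fix $k\in\N$ and expand
\[
\Bigl(\sum_{i=1}^m 2^{-\ell_i}\Bigr)^k \;=\; \sum_{(i_1,\ldots,i_k)} 2^{-(\ell_{i_1}+\cdots+\ell_{i_k})} \;=\; \sum_{L=k}^{k\lmax} a_L\, 2^{-L},
\]
where $\lmax=\max_i\ell_i$ and $a_L$ denotes the number of $k$-tuples $(i_1,\ldots,i_k)$ of symbols whose concatenated codeword $C(x_{i_1})\cdots C(x_{i_k})$ has total length exactly $L$. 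Unique decodability means $C^+$ is injective, so distinct $k$-tuples produce distinct binary strings in $A^L$; since $|A^L|=2^L$, we obtain the key bound $a_L\le 2^L$. Substituting gives $(\sum_i 2^{-\ell_i})^k\le k\lmax$, and taking $k$-th roots as $k\to\infty$ forces $\sum_i 2^{-\ell_i}\le 1$.

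For the converse, I would give an explicit greedy construction of an instantaneous code. Relabel so that $\ell_1\le\ell_2\le\cdots\le\ell_m$, and identify $A^+$ (up to length $\lmax$) with the complete binary tree of depth $\lmax$, where the node at depth $\ell$ corresponding to a string $a\in A^\ell$ has as descendants exactly those strings having $a$ as a prefix. Proceeding inductively: at step $i$, select any node at depth $\ell_i$ that has not yet been used or declared a descendant of an earlier chosen node, assign it to the $i$-th symbol, and mark all of its descendants as used. Choosing a node at depth $\ell_i$ consumes a fraction $2^{-\ell_i}$ of the total weight (equivalently, $2^{\lmax-\ell_i}$ leaves at depth $\lmax$). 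The hypothesis $\sum_{i=1}^m 2^{-\ell_i}\le 1$ guarantees that before step $i$ the fraction already consumed is at most $\sum_{j<i}2^{-\ell_j}<1$, so at least one node at depth $\ell_i$ remains available; the assignment is prefix-free by construction.

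The main obstacle will be the necessity direction, which requires the non-obvious device of raising the sum to the $k$-th power and letting $k\to\infty$; the converse is a routine greedy argument on the binary tree. I should also be careful to note that it is injectivity of the \emph{extension} $C^+$, not of $C$ itself, that supplies the bound $a_L\le 2^L$, since this is precisely where unique decodability enters in an essential way.
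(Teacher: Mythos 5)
Your proof is correct and takes essentially the same approach as the paper's source: the paper cites \cite[Theorems 5.2.1 and 5.5.1]{CT91} for this statement rather than proving it, and your $k$-th power expansion with the count $a_L\le 2^L$ is precisely McMillan's argument, which is also the device the paper itself adapts in its appendix proof of the quantum version (Theorem~\ref{quantum Kraft}), where the bound $|C_n^N|\le 2^N$ and the observation that an exponential cannot stay below a linear function play the roles of your $a_L\le 2^L$ and the $k$-th root limit. Your greedy tree construction for the converse, with lengths sorted increasingly so that $\sum_{j<i}2^{-\ell_j}<1$ guarantees an unblocked node at depth $\ell_i$, is the standard instantaneous-code construction and is complete as stated.
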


\begin{Rem}
The Kraft-McMillan Inequality is sometimes referred to only as the Kraft Inequality. 
This is due to the fact that Kraft was the first to prove the inequality in \cite{Kraft49}, although his original result refers only to instantaneous codes. McMillan later extended Kraft's work to include all uniquely decodable codes in \cite{Mcmillan56}. 
Furthermore, it is worth noting that the Kraft-McMillan inequality can be extended to a countable set of symbols (see Theorem~5.2.2 and the corollary following Theorem~5.5.1 in \cite{CT91}). When including countable sets of symbols, the inequality is referred to as the Extended Kraft-McMillan Inequality. 
\end{Rem}

An immediate corollary to the Kraft-McMillan Inequality is the following:

\begin{Cor}
Given any uniquely decodable code with codeword lengths $\ell_1,\ell_2,\ldots, \ell_m$, there exists an instantaneous code with these same code lengths. 
\end{Cor}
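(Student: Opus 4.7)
The plan is to derive the corollary as an immediate two-step application of the Kraft-McMillan Inequality (Theorem~\ref{Kraft}), with essentially no additional work required.

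First, I would start from the assumption that we have a uniquely decodable code $C$ over a symbol set $S$ of cardinality $m$, with codeword lengths $\ell_1, \ell_2, \ldots, \ell_m$. Applying the forward direction of Theorem~\ref{Kraft} (the direction originally due to McMillan), these lengths must satisfy
\[
\sum_{i=1}^{m} 2^{-\ell_i} \le 1.
\]
This is the only consequence of unique decodability that we need to extract.

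Next, I would invoke the converse direction of Theorem~\ref{Kraft} (the direction originally due to Kraft): since the sequence $\ell_1, \ldots, \ell_m$ satisfies the Kraft-McMillan inequality, there exists an instantaneous code over a symbol set of cardinality $m$ whose codewords have exactly these lengths. Identifying this symbol set with $S$ via any bijection produces an instantaneous code on $S$ with codeword lengths $\ell_1, \ldots, \ell_m$, which is what the corollary asserts.

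There is no real obstacle here; the content of the corollary is entirely packaged inside the two halves of Theorem~\ref{Kraft}. The only mild subtlety worth mentioning is that the theorem as stated covers the finite case $|S| = m \in \N$, so one should note (if the corollary is to be applied to countable symbol sets elsewhere in the paper) that the Extended Kraft-McMillan Inequality mentioned in the preceding remark yields the analogous statement for countably infinite $S$ by the same two-step argument.
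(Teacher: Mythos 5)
Your proof is correct and is exactly the argument the paper intends: the corollary is stated there as an immediate consequence of Theorem~\ref{Kraft}, obtained by feeding the inequality from the forward (McMillan) direction into the converse (Kraft) direction. Your remark about the countable case via the Extended Kraft-McMillan Inequality is a sensible extra observation but not needed for the statement as given.
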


We call a uniquely decodable code $C$ \textbf{optimal} whenever the expected length $EL(C)$ is minimized; i.e.\ the optimal uniquely decodable code is given by 
\begin{equation}\label{optimality}
\begin{split}
C_{\text{opt}}:&= \argmin_C \{EL(C) : C\text{ is uniquely decodable}\} \\
&= \argmin_C \{EL(C) : \text{the codeword lengths of the }C\text{ satisfy} \sum_i 2^{-\ell_i}\le 1\},
%&= \argmin_C \{EL(C) : \text{the codeword lengths of the symbol set }S\text{ satisfy} \sum_i 2^{-\ell_i}\le 1\},
\end{split}
\end{equation}
where the last equality follows from Theorem~\ref{Kraft}. 
We set $EL^*(X):=EL(C_{\text{opt}})$ the \textbf{optimal expected length of the random variable \bma{$X$}}. 
The results for the optimal expected length are summarized in the following: 

\begin{Thm}(\text{\cite[Theorem 5.4.1]{CT91}})\label{optimal code}
Let $X$ be a random variable with range in the symbol set $S$. 
Then the optimal expected length of $X$ satisfies the inequality 
$$H(X)\le EL^*(X)<H(X)+1,$$ 
where $H(X)$ is the Shannon entropy of $X$, i.e.\ $H(X)=-\sum_{i\in S} p_i\log_2 p_i$ where $(p_i)_{i\in S}$ is the pmf of $X$. 
\end{Thm}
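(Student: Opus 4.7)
The plan is to prove the two inequalities separately, and each one will rest directly on Theorem~\ref{Kraft}.

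For the lower bound $H(X) \le EL^*(X)$, I would fix any uniquely decodable code $C$ with codeword lengths $\ell_1, \dots, \ell_m$ and introduce the auxiliary quantities $c := \sum_{i=1}^m 2^{-\ell_i}$ and $q_i := 2^{-\ell_i}/c$. By the forward direction of Theorem~\ref{Kraft}, $c \le 1$, so $\log_2 c \le 0$, and $(q_i)$ is a genuine probability mass function on $S$. A direct manipulation then gives
\begin{equation*}
EL(C) - H(X) \;=\; \sum_{i} p_i \ell_i + \sum_i p_i \log_2 p_i \;=\; \sum_i p_i \log_2 \frac{p_i}{q_i} - \log_2 c,
\end{equation*}
and both terms on the right are nonnegative: the first by Gibbs' inequality (nonnegativity of the Kullback--Leibler divergence $D(p\|q)\ge 0$, proved via Jensen's inequality applied to $-\log_2$), and the second because $c\le 1$. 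Taking the infimum over uniquely decodable $C$ gives $H(X) \le EL^*(X)$.

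For the upper bound $EL^*(X) < H(X)+1$, I would exhibit an explicit uniquely decodable code achieving the bound. Define the Shannon--Fano lengths $\ell_i := \lceil -\log_2 p_i \rceil$ (ignoring the $p_i=0$ symbols, which can be omitted). Then $2^{-\ell_i} \le p_i$, so $\sum_i 2^{-\ell_i} \le \sum_i p_i = 1$, and the converse direction of Theorem~\ref{Kraft} produces an instantaneous (hence uniquely decodable) code $C$ with exactly these codeword lengths. Since $\ell_i < -\log_2 p_i + 1$, its expected length satisfies
\begin{equation*}
EL(C) \;=\; \sum_i p_i \ell_i \;<\; \sum_i p_i(-\log_2 p_i + 1) \;=\; H(X) + 1,
\end{equation*}
and therefore $EL^*(X) \le EL(C) < H(X)+1$.

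Neither step is likely to be a serious obstacle; the nontrivial ingredient in both directions is Theorem~\ref{Kraft}, which bridges the combinatorial constraint on lengths and the probabilistic quantity we are optimizing. The one point requiring mild care is the countable case: the rearrangement in the lower bound and the series $\sum 2^{-\ell_i} \le 1$ in the upper bound converge unconditionally because the summands are nonnegative, so the Extended Kraft--McMillan version mentioned in the remark after Theorem~\ref{Kraft} covers a countably infinite symbol set without change.
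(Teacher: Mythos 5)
Your proof is correct and is essentially the argument the paper defers to: it cites \cite[Theorem 5.4.1]{CT91}, whose standard proof is exactly your two steps (the lower bound via $q_i = 2^{-\ell_i}/c$, Gibbs' inequality, and the forward Kraft--McMillan inequality; the upper bound via the Shannon lengths $\ell_i = \lceil -\log_2 p_i\rceil$ and the Kraft converse). Your closing remark on the countable case correctly matches the Extended Kraft--McMillan caveat noted in the paper's remark.
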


\noindent Well known examples of codes which satisfy the inequality of Theorem~\ref{optimal code} are the so-called Huffman codes and Shannon-Fano codes. 

In the above theorem, we are only interested in the compressability of single codewords. Suppose instead that we wish to compress strings of codewords with code distributions given by a stochastic process $\X=(X_i)_{i=1}^\infty$. 
Then, for each $n\in \N$, Theorem~\ref{optimal code} holds for the random vector $(X_1,X_2,\ldots, X_n)$, giving 
$$ H(X_1,X_2,\ldots, X_n)\le EL^*(X_1,X_2,\ldots,X_n)< H(X_1,X_2,\ldots, X_n)+1.$$ 
For each $n\in\N$, we set 
\begin{equation}\label{expected length n}
EL^*_n(\X):=\frac{1}{n}EL^*(X_1,X_2,\ldots, X_n)
\end{equation} 
to be the \textbf{optimal expected codeword length per symbol} for the first $n$ symbols. 
We can then express the optimal expected codeword length per symbol (over all symbols) in terms of the entropy rate, which is a dynamical entropy for stochastic processes. The \textbf{entropy rate} of a stochastic process $\X=(X_n)_{n=1}^\infty$ is given by 
\begin{equation*}
H(\X)=\lim_{n\ra\infty} \frac{1}{n} H(X_1,\ldots, X_n), 
\end{equation*} 
whenever the limit exists. There are many instances when it is known that the above limit exists (e.g.\ stationary stochastic processes, see \cite[Theorem 4.2.1]{CT91}). %\cite[Page 60]{CT91} for definition). 

\begin{Thm}(\text{\cite[Theorem 5.4.2]{CT91}})\label{expected per symbol}
The optimal expected codeword length per symbol for a stochastic process $\X=(X_i)_{i=1}^\infty$ satisfies 
\begin{equation*}
\frac{H(X_1,X_2,\ldots, X_n)}{n}\le EL^*_n(\X) < \frac{H(X_1,X_2,\ldots, X_n)}{n}+\frac{1}{n}.
\end{equation*}
Moreover, if $\X$ is such that the limit defining entropy rate exists (e.g.\ $\X$ is a stationary stochastic process), then 
\begin{equation*}
EL_n^*(\X)\ra H(\X)\quad\text{as }n\ra\infty.
\end{equation*}
In particular, if $\X$ consists of independent identically distributed (i.i.d.) copies of a random variable $X$, then 
\begin{equation*}
EL_n^*(\X)\ra H(X)\quad\text{as }n\ra\infty.
\end{equation*} 
\end{Thm}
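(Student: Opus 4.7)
The plan is to treat the theorem as a direct consequence of Theorem~\ref{optimal code} applied coordinatewise, followed by passage to the limit. The first inequality requires no new ingredient: I would simply view the block $(X_1,X_2,\ldots,X_n)$ as a single random variable taking values in the product symbol set $S^n$ and invoke Theorem~\ref{optimal code} for this random variable to obtain
\begin{equation*}
H(X_1,X_2,\ldots,X_n) \;\le\; EL^*(X_1,X_2,\ldots,X_n) \;<\; H(X_1,X_2,\ldots,X_n)+1.
\end{equation*}
Dividing through by $n$ and using the definition $EL^*_n(\X)=\frac{1}{n}EL^*(X_1,\ldots,X_n)$ from \eqref{expected length n} yields the stated sandwich bound.

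For the second assertion, I would take $n\to\infty$ in the sandwich. By hypothesis the entropy rate
\begin{equation*}
H(\X)=\lim_{n\to\infty}\frac{H(X_1,\ldots,X_n)}{n}
\end{equation*}
exists, while clearly $\frac{1}{n}\to 0$. Both the left and right bounding sequences therefore converge to the same limit $H(\X)$, so the squeeze theorem forces $EL^*_n(\X)\to H(\X)$. For the final (i.i.d.) case I would note that independence gives $H(X_1,\ldots,X_n)=\sum_{i=1}^n H(X_i)=nH(X)$ because the $X_i$ are identically distributed; hence $\frac{H(X_1,\ldots,X_n)}{n}=H(X)$ identically in $n$, so the entropy rate exists and equals $H(X)$, and the conclusion follows from the previous step. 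Alternatively, this is a special case of the stationary statement, since any i.i.d.\ process is stationary and the existence of the entropy rate for stationary processes is recorded in \cite[Theorem~4.2.1]{CT91}.

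There is no genuine obstacle here: all the work is carried by Theorem~\ref{optimal code}, whose proof in turn rests on Theorem~\ref{Kraft} (e.g.\ via a Shannon-Fano code witnessing the upper bound and the Kraft inequality together with Gibbs' inequality yielding the lower bound). The only point that warrants a brief comment is that Theorem~\ref{optimal code} was stated for a random variable with values in a symbol set $S$, so one should remark that it applies unchanged when the symbol set is replaced by $S^n$ (finite or countable), with the joint distribution of $(X_1,\ldots,X_n)$ playing the role of the pmf. No stationarity or memorylessness assumption enters the derivation of the sandwich inequality — these hypotheses are used only to ensure convergence of $H(X_1,\ldots,X_n)/n$.
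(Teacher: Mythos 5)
Your proposal is correct and matches the paper's approach exactly: the paper itself notes, just before stating the theorem, that Theorem~\ref{optimal code} applied to the random vector $(X_1,\ldots,X_n)$ yields the sandwich bound, and the limit statements then follow by the squeeze argument with $H(X_1,\ldots,X_n)=nH(X)$ in the i.i.d.\ case, exactly as you argue (the paper defers the details to \cite[Theorem 5.4.2]{CT91}). Your remark that the sandwich inequality needs no stationarity, and that Theorem~\ref{optimal code} applies verbatim with $S^n$ as the symbol set, is the right observation and is all that needs checking.
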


This finishes our brief overview of data compression in classical information theory. For a more detailed exposition see \cite[Chapter 5]{CT91}.

\subsection{Quantum Data Compression}\label{Q Data Compression}

We begin with the description of indeterminate-length quantum codes, whose preliminary investigation began with Schumacher \cite{Schumacher94} and Braunstein et.\ al in \cite{BFGH00}, and they were formalized in \cite{SW01}. 
We may think of the codes introduced in the previous section as being varying-length codes; the term indeterminate-length is used to draw attention to the fact that a quantum code must allow for superpositions of codewords, including those superpositions containing codewords with different lengths. 
We will follow mainly the formalisms in \cite{BBHZ17} as opposed to the zero-extended forms of \cite{SW01}. 
A description of the connection between these two formalisms can be found in \cite{BF02}.

For any Hilbert space $H$, we will denote by $H^\oplus:=\oplus_{\ell=0}^\infty H^{\ten \ell}$ the \textbf{free Fock space} of $H$, where $H^{\ten 0}=\C$. 
We will denote the scalar $1\in H^{\ten 0}$ by $|\emptyset\rangle$ and refer to it as the \textbf{empty string}. 
Let $\mc S=\{p_n,|s_n\rangle\}_{n=1}^N$ be an \textbf{ensemble of pure states}, or simply ensemble, where $p=\{p_n\}_{n=1}^N$ is the pmf of a random variable $X$ and $|s_n\rangle$ is an element of a $d$-dimensional Hilbert space $H_\mc S$, for each $1\le n\le N$, such that $H_{\mc S}=\span\{|s_n\rangle\}_{n=1}^N$. 
The collection $\{|s_n\rangle\}_{n=1}^N$ will be referred to as the \textbf{symbol states} of the ensemble $\mc S$. 
An \textbf{(indeterminate-length) quantum code}, $U$, over a quantum binary alphabet $\mc A:=\{|0\rangle, |1\rangle\}$, which is an orthonormal basis for $H_\mc A=\C^2$, is a linear isometry $U:H_\mc S\ra H_\mc A^\oplus$. The \textbf{extended quantum code of \bma{$U$}} is the linear mapping $U^+:H_\mc S^\oplus\ra H_\mc A^\oplus$ given by 
\begin{equation*}
U^+(|s_1 s_2\cdots s_n\rangle)=U(|s_1\rangle)U(|s_2\rangle)\cdots U(|s_n\rangle),
\end{equation*}

\noindent for all $|s_1 s_2\cdots s_n\rangle \in H_\mc S^{\ten n}$ and $n\in\N$, and we set $U^+(|\emptyset\rangle)=|\emptyset\rangle$, where concatenation is defined according to \cite[Definition 2.3]{MR08} (see also \cite[Section V]{MRN09}). 

The quantum code $U$ is said to be \textbf{uniquely decodable} if the extended quantum code $U^+$ is an isometry. 
Throughout this paper, we will restrict ourselves only to the situation where the range of $U$ is a subset of $H_{\mc A}^{\oplus \lmax}$ for some $\lmax\in \N$; i.e.\ there is a finite upper bound $\lmax$ on the length of all codewords.

\begin{Rem}
The authors of \cite{BF02} allow non-empty strings to map to the empty string. In their paper, the authors send along a classical side channel to give the lengths of the codewords and so that convention is possible. Without the classical side channel (as is the approach in the present paper) allowing non-empty strings to map to the empty string will cause the quantum code to \textit{not} be uniquely decodable.
\end{Rem}

Let $\mc S=\{p_n,|s_n\rangle\}_{n=1}^N$ be an ensemble whose symbol states $\{|s_n\rangle\}_{n=1}^N$ span a Hilbert space $H_{\mc S}$ of dimension $d$. 
Consider a classical uniquely decodable code, $C$, on a symbol set, $S=\{x_i\}_{i=1}^d$, with $d$-many symbols. 
We will construct a corresponding uniquely decodable quantum code, $U$, from $C$ by identifying the classical binary alphabet $A=\{0,1\}$ with the quantum binary alphabet 
$\mc A=\{|0\rangle,|1\rangle\}\subseteq \C^2$ and the symbol set, $S$, with any orthonormal basis $\{|e_i\rangle\}_{i=1}^d$ of $H_\mc S$; this construction is given in \cite{BBHZ17}. 
Fix an orthonormal basis $\{|e_i\rangle\}_{i=1}^d$ of $H_\mc S$ and define the quantum code $U:H_\mc S\ra H_{\mc A}^\oplus$ by the equation
\begin{equation}\label{C-Q source code}
U=\sum_{i=1}^d |C(x_i)\rangle\langle e_i|.
\end{equation}
It is clear that $|C(x_i)\rangle \in H_\mc A^{\ten \ell_i}\subseteq H_\mc A^\oplus$, where $\ell_i$ is the length of $C(x_i)$, and that $\{|C(x_i)\rangle\}_{i=1}^d$ is an orthonormal set, so that $U$ is a linear isometry. Furthermore, since $C$ is uniquely decodable, the map $U^\ell:H_\mc S^{\ten\ell}\ra H_\mc A^\oplus$ defined by the equation 
\begin{equation*}\label{C-Q ext source code1}
U^\ell =\sum_{i_1=1}^d\cdots \sum_{i_\ell=1}^d |C(x_{i_1})C(x_{i_2})\cdots C(x_{i_\ell})\rangle\langle e_{i_1}e_{i_2}\cdots e_{i_\ell}|
\end{equation*}
is a linear isometry for each $\ell\in\N$. %, where $\N_0=\N\cup\{0\}$. 
Since the extended quantum code $U^+:H_\mc S^\oplus\ra H_\mc A^\oplus$ is given by 
\begin{equation*}\label{C-Q ext source code2}
U^+=\sum_{\ell=0}^\infty U^\ell, 
\end{equation*}
we see that $U^+$ is a linear isometry and hence $U$ is uniquely decodable. We will refer to quantum codes constructed from classical ones by Equation~\eqref{C-Q source code} as \textbf{classical-quantum encoding schemes (c-q schemes)}.

\begin{Rem}
Notice that the symbol states $\{|s_n\rangle\}_{n=1}^N$ of the ensemble $\mc S$ are not directly encoded by the $|C(x_i)\rangle$'s unless $N=d$ and there exists a permutation $\sigma$ of $\{1,\ldots, d\}$ such that  $|s_{\sigma(i)}\rangle=|e_i\rangle$ for every $i\in\{1,\ldots, d\}$. In fact $U|s_n\rangle$ need not belong to $H_\mc S^{\ten\ell}$ for any $\ell\in\N$, but can in general be in a superposition of different lengths. (Hence the term indeterminate-length quantum codes.)
\end{Rem} 

The Kraft-McMillan Inequality (Theorem~\ref{Kraft}) was initially extended to the quantum domain in \cite{SW01} and subsequently in \cite{MR08} and \cite{BBHZ17}. Before presenting (a slightly different) Quantum Kraft-McMillan Inequality, we will first introduce the length observable and quantum codes with length eigenstates. The \textbf{length observable} $\Lambda$ acting on $H_\mc A^\oplus$ is given by 
\begin{equation}\label{length observable}
\Lambda:= \sum_{\ell=0}^\infty \ell \Pi_\ell,
\end{equation}
where $\Pi_{\ell}$ is the orthogonal projection onto the subspace $H_\mc A^{\ten\ell}$ of $H_\mc A^\oplus$. 

We say that a quantum code $U: H_\mc S\ra H_{\mc A}^\oplus$ has \textbf{length eigenstates} if $U$ has the form 
\begin{equation}\label{length eigenstates}
U=\sum_{i=1}^d |\psi_i\rangle\langle e_i|.
\end{equation}
for some orthonormal basis $\{|e_i\rangle\}_{i=1}^d$ of $H_\mc S$ and some sequence $\{|\psi_i\rangle\}_{i=1}^d\subseteq H_{\mc A}^+$ such that, for each $1\le i\le d$, $|\psi_i\rangle\in H_{\mc A}^{\ten \ell_i}$ for some $\ell_i\in\N$. 

Note that the $|\psi_i\rangle$'s are orthogonal due to $U$ being a linear isometry. It is easy to see that every c-q scheme is a quantum code with length eigenstates. 
Lastly, for each $\ell\in\N\cup\{0\}$, we will refer to the elements of the set $\{\psi_i: i\in\{1,\ldots,d\},\ \psi_i\in H_{\mc A}^{\ten \ell}\}$ as the \textbf{length \bma{$\ell$} eigenstates of \bma{$U$}} and we will refer to $\{\ell_i\}_{i=1}^d$, where, for each $i=1,\ldots, d$, $\psi_i\in H_{\mc A}^{\ten \ell_i}$, as the \textbf{length eigenvalues of \bma{$U$}}. 

\begin{Rem}
The quantum versions of the Kraft-McMillan Inequality proved in \cite[Section IIC]{SW01} and \cite[Theorem 3.6]{MR08} are more general than the same proved in \cite[Theorem 1]{BBHZ17}, although the formalisms are quite different in all three. 
Our version of the quantum Kraft-McMillan Inequality, presented below, is a generalization of \cite[Theorem 1]{BBHZ17}, but is not quite in the full generality of \cite[Section IIC]{SW01} (in the forward direction) because we only consider uniquely decodable codes (as opposed to the more general notion called \textit{condensable codes} considered in \cite{SW01}). 
However, our version does have a converse statement, similar to the classical Kraft-McMillan Inequality, which is missing from the aforementioned quantum versions. 
\end{Rem}

\begin{Thm}(Quantum Kraft-McMillan Inequality)\label{quantum Kraft}
Any uniquely decodable quantum code $U$ with length eigenstates over a binary alphabet must satisfy the inequality $$\tr(U^\dagger 2^{-\Lambda}U)\le 1.$$ 
Conversely, if $U:H_{\mc S}\ra H_{\mc A}^\oplus$ is a linear isometry with length eigenstates satisfying the above inequality, then there exists a c-q scheme $\widetilde U$ with the same number of length $\ell$ eigenstates for each $\ell\in \N$. 
\end{Thm}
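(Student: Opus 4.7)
The plan is to reduce both directions to the classical Kraft--McMillan theorem (Theorem~\ref{Kraft}) applied to the length eigenvalues $\{\ell_i\}_{i=1}^d$ of $U$. The starting observation is that if $U = \sum_{i=1}^d |\psi_i\rangle\langle e_i|$ with $|\psi_i\rangle \in H_\mc A^{\ten \ell_i}$, then a direct calculation using the orthonormality of the $|\psi_i\rangle$'s and the cyclic property of the trace (applied to $UU^\dagger = \sum_i |\psi_i\rangle\langle \psi_i|$) gives
\begin{equation*}
\tr(U^\dagger\, 2^{-\Lambda}\, U) \;=\; \sum_{i=1}^d 2^{-\ell_i}.
\end{equation*}
Thus the quantum Kraft inequality is literally the classical Kraft inequality for the length eigenvalues, and the problem is to characterize which length-tuples arise.

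For the forward direction, I would exploit unique decodability of $U$ by iterating: since $U^+$ is an isometry on $H_\mc S^\oplus$, for every $n\in\N$ the concatenated vectors $U^+(|e_{i_1}\cdots e_{i_n}\rangle) = |\psi_{i_1}\cdots\psi_{i_n}\rangle$, as $(i_1,\ldots,i_n)$ ranges over $\{1,\ldots,d\}^n$, form an orthonormal set in $H_\mc A^\oplus$. Grouping these vectors by their total length $L=\ell_{i_1}+\cdots+\ell_{i_n}$, each group sits inside $H_\mc A^{\ten L}$ and therefore has at most $2^L$ elements. The standard counting trick from the classical proof then yields
\begin{equation*}
\Bigl(\sum_{i=1}^d 2^{-\ell_i}\Bigr)^{\!n} \;=\; \sum_{L} N^{(n)}(L)\, 2^{-L} \;\le\; n\lmax + 1,
\end{equation*}
where $N^{(n)}(L)$ counts the multi-indices of total length $L$. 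Taking $n$-th roots and letting $n\to\infty$ produces $\sum_i 2^{-\ell_i}\le 1$, which is equivalent to the stated trace inequality.

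For the converse, suppose $U=\sum_{i=1}^d |\psi_i\rangle\langle e_i|$ is a linear isometry with length eigenvalues $\{\ell_i\}$ satisfying $\sum_i 2^{-\ell_i}\le 1$. By the converse in Theorem~\ref{Kraft}, there is an instantaneous (hence uniquely decodable) classical binary code $C$ on a symbol set $\{x_i\}_{i=1}^d$ with $\ell(C(x_i))=\ell_i$ for each $i$. I would then use the same orthonormal basis $\{|e_i\rangle\}_{i=1}^d$ appearing in the representation of $U$ together with the c-q scheme construction~\eqref{C-Q source code} to set $\widetilde U := \sum_{i=1}^d |C(x_i)\rangle\langle e_i|$. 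The discussion surrounding~\eqref{C-Q source code} already establishes that $\widetilde U$ is a uniquely decodable quantum code; and since $|C(x_i)\rangle \in H_\mc A^{\ten \ell_i}$ by construction, the length $\ell$ eigenstates of $\widetilde U$ and of $U$ are in bijection for each $\ell\in\N\cup\{0\}$.

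The hard part is the forward direction, specifically passing from the abstract unique decodability of $U^+$ on the full Fock space $H_\mc S^\oplus$ to honest orthonormality of the iterated concatenations $|\psi_{i_1}\cdots\psi_{i_n}\rangle$ inside each fixed-length subspace $H_\mc A^{\ten L}$ of known dimension $2^L$; this is the place where the length-eigenstate assumption is essential, since it lets us segregate concatenations by total length and apply a dimension bound. Once this step is in hand, the combinatorial estimate mirrors the classical argument, and the converse is essentially an invocation of Theorem~\ref{Kraft} together with the c-q scheme construction.
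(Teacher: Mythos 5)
Your proposal is correct and follows essentially the same route as the paper's proof: the forward direction is the same counting argument (orthonormal concatenations of length eigenstates grouped by total length $L$, the dimension bound $2^L$ on each fixed-length subspace, and the $n$-th power trick forcing $\sum_{i=1}^d 2^{-\ell_i}\le 1$), and the converse likewise invokes the classical Kraft--McMillan converse together with the c-q scheme construction of Equation~\eqref{C-Q source code}. The only cosmetic differences are that you obtain orthogonality directly from the isometry property of $U^+$ applied to basis multi-indices, and you take $n$-th roots explicitly where the paper argues that an exponential left-hand side bounded by a linear right-hand side forces the base to be at most $1$.
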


The proof of Theorem~\ref{quantum Kraft} is presented in the Appendix.

We would like to find a quantum code which minimizes the amount of resources required. Unfortunately there are numerous ways to define the length of a codeword for an indeterminate-length quantum code (e.g.\ base length \cite{BF02}, exponential length \cite[Definition 6]{BBHZ17}, etc.). 
Here, we follow \cite[Definition 3]{BBHZ17} and define the \textbf{length of a codeword \bma{$|\omega\rangle$}}, which is a normalized vector in $H_{\mc A}^{\oplus}$ given by $|\omega\rangle=U|s\rangle$ for a unique symbol state $|s\rangle\in\{|s_n\rangle\}_{n=1}^N$, as the expectation with respect to the length observable in Equation~\eqref{length observable}. 
Explicitly, the length of a codeword $|\omega\rangle=U|s\rangle$ will be given by a function $\ell: H_{\mc A}^{\oplus}\ra \R^+$, defined as follows: 
\begin{equation}\label{q expected length}
\ell(|\omega\rangle) := \langle \omega |\Lambda|\omega\rangle=
\langle Us ,\Lambda Us\rangle= \langle s ,U^\dagger \Lambda U s\rangle.
\end{equation}
Whenever $U$ has length eigenstates and is given by Equation~\eqref{length eigenstates}, we see that Equation~\eqref{q expected length} simplifies to 
\begin{equation*}
\ell(|\omega\rangle) = \sum_{i=1}^d |\langle e_i|s\rangle|^2\ell_i,
\end{equation*}
where $\{\ell_i\}_{i=1}^d$ denotes the set of length eigenvalues of $U$. 

Again we follow \cite{BBHZ17} and, for any ensemble $\mc S=\{p_n,|s_n\rangle\}_{n=1}^N$, we define the \textbf{ensemble state} $\rho_{\mc S}$ of $\mc S$ by 
\begin{equation*}\label{ensemble state}
\rho_{\mc S}=\sum_{n=1}^N p_n |s_n\rangle\langle s_n|\in S_1(H_{\mc S}).    
\end{equation*} 
If $U$ is a quantum code on $H_{\mc S}$ define the \textbf{average codeword length} with respect to the ensemble $\mc S$ by 
$$EL(U)=\tr(\rho_{\mc S} U^\dagger \Lambda U).$$ 
We denote by $U_{\text{opt}}$ the \textbf{optimal quantum code with length eigenstates} for the ensemble $\mc S$ if 
\begin{equation*}\label{q optimal code}
\begin{split}
U_{\text{opt}}:&=\argmin_U \{EL(U) :U \text{ is uniquely decodable with length eigenstates}\}\\
&=\argmin_U \{EL(U) :\tr(U^\dagger 2^{-\Lambda}U)\le 1\} \\
&=\argmin_U \{EL(U) :U\text{ is a c-q scheme %with codeword lengths 
	satisfying }\sum_{i=1}^d 2^{-\ell_i}\le 1\} ,
\end{split}
\end{equation*}
where the second and third equality follow from Theorem~\ref{quantum Kraft}, and the $\{\ell_i\}_{i=1}^d$ in the third equality denote the length eigenvalues of $U$. 
The existence of $U_\opt$ follows from the existence of $C_\opt$ in Equation~\eqref{optimality} by the backward direction of Theorem~\ref{quantum Kraft}. 
The \textbf{optimal average codeword length} for the ensemble $\mc S$ is given by 
\begin{equation}\label{minimum length}
EL^*(\rho_{\mc S}):= EL(U_\opt)= \tr(\rho_{\mc S} U_\opt^\dagger \Lambda U_\opt).
\end{equation}

\noindent It is shown in \cite[Theorem 2]{BBHZ17} that the optimal c-q scheme (and hence optimal quantum code with length eigenstates by the converse of Theorem~\ref{quantum Kraft}) is given by the classical Huffman codes. The bounds on $EL^*(\rho_{\mc S})$ in terms of the von-Neumann entropy follow immediately. 

\begin{Thm}\label{bounds}
The minimum average codeword length for an ensemble $\mc S$ is bounded as follows, 
$$ S(\rho_{\mc S})\le EL^*(\rho_{\mc S}) < S(\rho_{\mc S})+1.$$
\end{Thm}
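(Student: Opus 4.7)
The plan is to reduce the quantum minimization to a pair of classical optimizations --- one over integer lengths, one over orthonormal bases --- and then apply Theorem~\ref{optimal code} together with the standard estimate of the Shannon entropy of a measurement distribution by the von~Neumann entropy. By Theorem~\ref{quantum Kraft}, the infimum defining $EL^*(\rho_{\mc S})$ is attained at a c-q scheme $U = \sum_{i=1}^d |C(x_i)\rangle\langle e_i|$ for some orthonormal basis $\{|e_i\rangle\}_{i=1}^d$ of $H_{\mc S}$ and some classical uniquely decodable code $C$ with integer lengths $\ell_1,\ldots,\ell_d$ satisfying $\sum_i 2^{-\ell_i}\le 1$.

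Since $U$ has length eigenstates, $U^\dagger \Lambda U = \sum_i \ell_i |e_i\rangle\langle e_i|$, and hence
$$EL(U) = \tr(\rho_{\mc S} U^\dagger \Lambda U) = \sum_{i=1}^d q_i \ell_i, \qquad q_i := \langle e_i|\rho_{\mc S}|e_i\rangle.$$
Thus $EL(U)$ is exactly the expected length of the classical code $C$ under the pmf $(q_i)$ on the symbol set $\{x_1,\ldots,x_d\}$. In Stage~1, for a fixed basis $\{|e_i\rangle\}$, Theorem~\ref{optimal code} bounds the minimum of $\sum_i q_i \ell_i$ over Kraft-compliant integer lengths by $H(q) \le \min_{(\ell_i)} \sum_i q_i \ell_i < H(q)+1$, where $H(q) = -\sum_i q_i \log_2 q_i$. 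In Stage~2, one minimizes over the basis. The key fact here is that for any orthonormal basis $\{|e_i\rangle\}$ one has $H(q) \ge S(\rho_{\mc S})$, with equality precisely when $\{|e_i\rangle\}$ diagonalizes $\rho_{\mc S}$. This is a standard consequence of the spectral decomposition $\rho_{\mc S} = \sum_j \la_j|\phi_j\rangle\langle\phi_j|$ and concavity of $-x\log_2 x$: since $q_i = \sum_j \la_j |\langle e_i|\phi_j\rangle|^2$ and $M_{ij} := |\langle e_i|\phi_j\rangle|^2$ is doubly stochastic, Jensen's inequality yields $H(q) \ge \sum_j (-\la_j\log_2\la_j) = S(\rho_{\mc S})$.

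Combining the two stages, the lower bound $EL^*(\rho_{\mc S}) \ge S(\rho_{\mc S})$ is immediate, since for every basis $EL(U) \ge H(q) \ge S(\rho_{\mc S})$. For the upper bound, choose $\{|e_i\rangle\}$ to be an eigenbasis of $\rho_{\mc S}$ so that $q_i = \la_i$ and $H(q) = S(\rho_{\mc S})$, and apply a Huffman code to the distribution $(\la_i)$; the resulting c-q scheme $U$ satisfies $EL(U) < H(q) + 1 = S(\rho_{\mc S}) + 1$, hence $EL^*(\rho_{\mc S}) < S(\rho_{\mc S}) + 1$. The main obstacle is justifying the optimality of the eigenbasis, i.e.\ the inequality $H(q) \ge S(\rho_{\mc S})$; the classical Theorem~\ref{optimal code} does not see this extra quantum degree of freedom, and the Jensen/doubly-stochastic argument above is what ties the quantum problem to its classical counterpart.
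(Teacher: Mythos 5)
Your proof is correct and follows essentially the same route as the paper, which simply cites \cite[Theorem 3]{BBHZ17}: there, as in your argument, the quantum Kraft--McMillan theorem reduces the problem to c-q schemes, the optimal scheme is a Huffman code applied in the eigenbasis of $\rho_{\mc S}$, and the key quantum ingredient is exactly your observation that the diagonal distribution $q_i=\langle e_i|\rho_{\mc S}|e_i\rangle$ satisfies $H(q)\ge S(\rho_{\mc S})$ (via the doubly stochastic matrix and concavity), with equality in the eigenbasis. Your write-up just supplies in full the details the paper outsources to that reference, so there is nothing to flag beyond the harmless imprecision that equality $H(q)=S(\rho_{\mc S})$ characterizes diagonalizing bases only when the spectrum is nondegenerate --- which does not affect the proof, since only the inequality and the existence of an equality-achieving basis are used.
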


\begin{proof}
See \cite[Theorem 3]{BBHZ17}. 
\end{proof}

Next, we wish to consider the optimal average codeword length per symbol for a collection of ensembles $\{\mc S^k\}_{k=1}^\infty$, where $\mc S^k=\{p(n_1,\cdots, n_k), |s_1s_2\cdots s_k\rangle \}_{n_1,\ldots,n_k=1}^N$ and probabilities given by the pmf $p$ of a stochastic process $\X$. 
We will refer to such collections of ensembles as \textbf{stochastic ensembles}. 
Note that, by the definitions of a stochastic process, a stochastic ensemble $\mc S^k$ must be compatible in the following sense: 
\begin{equation*}\label{compatible}
\sum_{n_{k+1}=1}^N p(n_1,\cdots,n_{k},n_{k+1})=p(n_1,\cdots, n_{k}),
\end{equation*} 
for all $n_1,\ldots,n_{k}\in\{1,\ldots, N\}\text{ and }k\in\N$.
Notice that we allow for the possibility that preparations of the ensemble at each time be dependent upon previous preparations. If the preparations of the ensemble are independent and identically prepared copies of $\mc S=\{p_n,|s_n\rangle\}_{n=1}^N$; i.e.\ the stochastic process $\X$ is made up of i.i.d.\ copies of a random variable $X$, then $p(n_1,\cdots, n_k)=p_{n_1}p_{n_2}\cdots p_{n_k}$ and $\rho_{\mc S^k}=\rho_{\mc S^1}^{\ten k}$, where $\rho_{\mc S^k}=\sum_{n_1,\ldots,n_k=1}^N p(n_1,\ldots,n_k)|s_{n_1}\cdots s_{n_k}\rangle\langle s_{n_1}\cdots s_{n_k}|$. %, for each $k\in\N$.  
For each $k\in\N$, let 
\begin{equation*}\label{n optimal length}
EL^*_k(\rho_{\mc S^k})=\frac{1}{k}EL^*(\rho_{\mc S^k})
\end{equation*}
be the \textbf{optimal average codeword length per symbol} for the first $k$ symbols with respect to the ensemble $\mc S^k$, where $EL^*(\rho_{\mc S^k})$ is given by Equation~\eqref{minimum length}. 
Notice that the optimal average codeword length per symbol is defined analogously to the classical case in 
Equation~\eqref{expected length n}. Then, from Theorem~\ref{bounds}, we have 
\begin{equation}\label{bounds 2}
\frac{1}{k}S(\rho_{\mc S^k})\le EL^*_k(\rho_{\mc S^k})< \frac{1}{k}S(\rho_{\mc S^k})+\frac{1}{k}.
\end{equation}

\noindent In the following section, we will relate the above quantities to the dynamical entropy of a quantum dynamical system.

%-------------------QDE Data Compression-----------------------

\section{An expression for the optimal quantum data compression rate using quantum dynamical entropy}\label{QDC via QDE}

\subsection{A quantum dynamical system associated with a stationary Markov ensemble}\label{OQRW sect}

In this article we consider stochastic processes 
$\X =(X_n)_{n=1}^\infty$ such 
that for some fixed $N < \infty$, each $X_n$ is a random variable with values in $\{1,\ldots, N\}$. If $\X$ is a 
stochastic process set $p_\X$ to be the 
pmf of $\X$, i.e.\ $p_\X$ is a probability measure on 
$\{ 0, \ldots , N\}^\N$, such that for every $k \in \N$ and 
$n_1, \ldots , n_k \in \{ 1, \ldots , N \}$, $p_\X (n_1, \ldots , n_k)= \pr [ X_1 =n_1, \ldots , X_k =n_k]$.  We define the associated stochastic ensemble 
$\{\mc S^k\}_{k=1}^\infty$ by setting 
$\mc S^1=\{ p_{\X}(n), |s_n\rangle\}_{n=1}^N$ whose symbol states span $H_{\mc S}$ 
and $\mc S^k=\{ p_{\X}(n_1,\ldots, n_k), |s_{n_1}\cdots s_{n_k}\rangle \}_{n_1,\ldots, n_k=1}^N$ whose symbol states span 
$H_\mc S^{\ten k}$ 
for each $k\in\N$. 

Recall that a stochastic process $\X$ is called \textbf{stationary} if the measure $p_{\X}$  is invariant with respect to the translation map, i.e.\ if for every $k \in \N$, $p_{\X}=p_{\Y^{(k)}}$
where the stochastic process $\Y^{(k)}=(Y^{(k)}_n)_{n \in \N}$ is defined by 
$Y^{(k)}_n=X_{n+k}$ for every $n \in \N$. Obviously, if the 
stochastic process $\X= (X_n)_{n=1}^\infty$ is stationary 
then the random variables $X_n$ are identically distributed.
The stochastic process $\X$ is called a \textbf{Markov process} if $\pr [X_{k+1}|X_1, \ldots , X_k]= \pr [ X_{k+1}|X_k]$
for every $k \in \N$.
Hence a stochastic process $\X=(X_n)_{n=1}^\infty$ is a \textbf{stationary Markov process} 
if and only if  the random variables $X_n$ are identically distributed and
there exists a column stochastic matrix $(p_{i,j})_{i,j=1}^N$ such that for every $k \in \N$ and
$n_1, \ldots , n_k , n_{k+1} \in \{ 1, \ldots , N\}$, we have 
$$
\pr [X_{k+1}=n_{k+1}|X_1=n_1, \ldots , X_k=n_k]=
\pr[X_{k+1}=n_{k+1}|X_k=n_k]=p_{n_{k+1},n_k}.
$$
Equivalently, a stochastic process $\X=(X_n)_{n=1}^\infty$ is a stationary Markov process if and only if the random 
variables $X_n$ are identically distributed and there exists a transition matrix $P=(p_{n,m})_{n,m=1}^N$ and an initial distribution $p=\{p_{n}\}_{n=1}^N$ such that the pmf of $\X$ is given by $p_{\X}(n_1,\ldots, n_k)=p_{n_1}\prod_{l=2}^k p_{n_l, n_{l-1}}$, for each $k\in\N$ and $1\le n_1,\ldots, n_k\le N$, and $p$ is invariant with respect to $P$; i.e.\ $Pp=p$; (the reason that $p$ is called ``initial distribution" is because it coincides with the distribution of the random variable $X_1$).  
We will refer to a stochastic ensemble governed by a Markov process $\X$ as the \textbf{Markov ensemble governed by \bma{$\X$}}. 
Whenever the Markov process is stationary we will refer to the Markov ensemble as being stationary.

Let $\{\mc S^k\}_{k=1}^\infty$ be a stationary Markov ensemble governed by a stationary Markov process $\X$ having  transition matrix $P=(p_{n,m})$ and initial distribution $p=\{p_n\}_{n=1}^N$. 
Setting $d=\dim(H_{\mc S})$, so that $d^k=\dim(H_{\mc S}^{\ten k})$ for each $k\in\N$, the following sequence of ensemble states which represent this collection of ensembles is defined:  
\begin{equation*}\label{rho1}
\rho_{S^1}= \sum_{n=1}^N p_n |s_n\rangle\langle s_n| \in M_d=S_1(H_{\mc S})
\end{equation*}
and for each $k\in\N$ with $k\ge 2$,
\begin{equation*}\label{rhok}
\begin{split}
\rho_{S^k}&= \sum_{n_1,\ldots, n_k=1}^N p_{n_1}\prod_{l=2}^k p_{n_l, n_{l-1}} |s_{n_1}\cdots s_{n_k}\rangle\langle s_{n_1}\cdots s_{n_k}|  \\
&= \sum_{n_1=1}^N p_{n_1} |s_{n_1}\rangle\langle s_{n_1}|\ten \cdots \ten \sum_{n_k=1}^N p_{n_k, n_{k-1}} |s_{n_k}\rangle\langle s_{n_k}|\in M_d^{\ten k}=S_1(H_{\mc S}^{\ten k}),  
\end{split}
\end{equation*}
where we used the following notation.

\begin{Not}
If $H$ is a Hilbert space then $S_1(H)$ will denote the space of trace-class operators on $H$. 
In the sequel we will frequently donote by $\mc A$ a unital $C^*$-algebra and $\Sigma (\mc A)$ 
will denote the set of normal states on $\mc A$. 
Since we assume that $\mc A = B(H)$, we will identify each normal state, 
$\omega\in\Sigma(\mc A)$ with its density operator 
$\rho\in S_1(H)$ through the identification $\omega(\cdot)=\tr(\rho\ \cdot)$. 
\end{Not}

We now define a quantum dynamical system associated with
the above stationary Markov ensemble. Recall that a 
\textbf{quantum dynamical system} is a triplet 
$(\mc A, \Theta, \rho)$ where $\mc A$ is a unital 
$C^*$-algebra,
 $\Theta : \mc A \to \mc A$ is a positive unital map, and 
$\rho$ is a density operator on $\mc A$. In some situations the map 
$\Theta$ is taken to be a $*$-automorphism, but we will
not adopt this restriction here. The reason that we assume that 
$\Theta$ is positive and unital, is because we would like to
have that the dual map $\Theta^\dagger$ maps the set of 
density operator of $\mc A$ (i.e.\ the set of positive unital functionals on
$\mc A$) to itself.  
Throughout this paper we will, for simplicity, ignore the GNS construction and when we do not specify the $C^*$-algebra 
we will assume that it is equal to 
$\mc A = B(H)$ for some Hilbert space $H$.

The quantum dynamical system associated to the above 
stationary Markov ensemble is defined as follows:
Let $\mc A= B(\C^N)$, $\Theta: B(\C^N) \to B(\C^N)$ 
be defined by 
\begin{equation} \label{E:MarkovTheta}
\Theta (| k \rangle \langle \ell |)= \delta_{k,\ell} \sum_{i=1}^N p_{k , i}
|i \rangle \langle i |, 
\end{equation}
and let the density operator $\rho \in S_1 (\C^N)$ be defined by 
\begin{equation} \label{E:MarkovRho}
\rho = \sum_{n=1}^N p_n |n \rangle \langle n | . 
\end{equation}
%which can be identified with a normal state in $\Sigma (\mc A)$.

It is easy to verify that the map $\Theta$ is 
positive and unital. Indeed,
$$
\Theta (\1_{\C^N}) = \Theta \left(\sum_{k=1}^M | k \rangle \langle k|\right) = \sum_{k,i=1}^N p_{k,i}|i \rangle \langle k| = \sum_{i=1}^N | i \rangle \langle i| \sum_{k=1}^N p_{k,i} = \sum_{i=1}^N | i \rangle \langle i | = \1_{\C^N} ,
$$
thus $\Theta$ is unital. Also it is easy to verify that the dual map 
$\Theta^\dagger :S_1(\C^N) \to S_1 (\C^N)$ is given by
\begin{equation} \label{E:MarkovThetaDagger}
\Theta^\dagger (| m \rangle \langle n |) = 
\delta_{m,n} \sum_{i=1}^N p_{i,n} | i \rangle \langle n | .
\end{equation}
Note that throughout the article  we use dagger to denote trace-class duality i.e.
$$
\tr (\Theta^\dagger (| m \rangle \langle n| ) |k \rangle \langle \ell |) 
= \tr (| m \rangle \langle n| \Theta ( |k \rangle \langle \ell |) ),
$$
for all $m,n,k,\ell \in \{ 1, \ldots , N \}$, and star to denote 
Hermitian conjugate with respect to the underlying Hilbert space, $\C^N$.
We thus obtain that the dual map $\Theta^\dagger$ is 
positive and trace-preserving. 

Finally we recall that if $H$ is a Hilbert space, then an \textbf{operational partition of unity on $H$} is a 
family $\gamma=(\gamma_i)_{i=1}^d$ for some $d \in \N$, 
satisfying 
$\gamma_i\in B(H)$, for each $i$, and  
$\sum_{i=1}^d \gamma_i^* \gamma_i =\1_H$. 
Let 
$\gamma=(\gamma_i)_{i=1}^d$ be the operational partition of unity on the Hilbert space $\C^N$ 
associated with the above Markov ensemble, defined as follows:

\begin{equation}\label{partition}
\gamma_i:= \sum_{n=1}^N \langle e_i | s_n\rangle |n\rangle\langle n|,\quad\text{for all }i=1,\ldots, d,
\end{equation}
where $\{e_i\}_{i=1}^d$ is a fixed orthonormal basis of 
$H_{\mc S}$. 
Notice that 
\begin{equation*}
\begin{split}
\sum_{i=1}^d\gamma_i^*\gamma_i&= \sum_{i=1}^d \sum_{m,n=1}^N\langle e_i | s_m \rangle 
\overline{\langle e_i | s_n \rangle} |n\rangle\langle n| m\rangle\langle m| \\
&=  \sum_{n=1}^N (\sum_{i=1}^d|\langle e_i | s_n\rangle|^2) |n\rangle\langle n| \\
&=\sum_{n=1}^N \|s_n\| |n\rangle\langle n| = \1_{\C^N},
\end{split}
\end{equation*}
where the second to last equality follows by Parseval's identity. 
Hence $\gamma$ is indeed an operational partition of unity.

\subsection{Quantum Dynamical Entropy via Quantum Markov Chains }\label{sect aow}

In this subsection we recall the definition of quantum Markov chains (QMCs) and dynamical entropy thereon. 
Fix a quantum dynamical system $(\mc A,\Theta,\rho)$
with $\mc A =B(H)$ for some Hilbert space $H$
%and $\Theta$ being a positive unital map, 
and fix an operational partition of unity 
$\gamma = (\gamma_i)_{i=1}^d$ on $H$. 
Following \cite[Page 413]{Tuyls98} (see also \cite[Equation 3.14]{KOW99}), we will consider the \textbf{transition expectation} 
$$
\mc E_\gamma: M_d\ten \mc A = B( \C^d \ten H) \ra \mc A
$$
given by the equation 
\begin{equation}\label{trans exp 1}
\mc E_{\gamma}([a_{i,j}]_{i,j=1}^d)= \sum_{i,j=1}^d \gamma_i^* a_{i,j} \gamma_j \quad\text{for all }[a_{i,j}]_{i,j=1}^d=\sum_{i,j=1}^d |e_i\rangle\langle e_j|\ten a_{i,j}\in M_d\ten \mc A, 
\end{equation}
for some fixed orthonormal basis $\{e_i\}_{i=1}^d$ of $\C^d$. 
Further we define the transition expectation 
\begin{equation}\label{trans exp 3}
\mc E_{\gamma,\Theta}=\Theta\circ \mc E_\gamma
:M_d \ten \mc A \to \mc A. 
\end{equation}
Its dual map
$$
\mc E_{\gamma , \Theta}^\dagger =\mc E_\gamma^\dagger \circ \Theta^\dagger : 
S_1 (H)= \Sigma (\mc A)  \to S_1 (\C^d \ten H) = \Sigma (M_d \ten \mc A) ,
$$
(which is defined using trace duality), is usually called a lifting
because it ``lifts" states from $\mc A$ to $M_d \ten \mc A$.

If $H$ is a Hilbert space, $\mc A$ is the von~Neumann algebra $B(H)$ of all bounded operators on $H$, $\rho$ is a density operator on $H$ and, for some $d\in\N$, $\mc E: M_d\ten \mc A\ra \mc A$ is a transition expectation, then the pair $\{\rho, \mc E\}$ is called a \textbf{quantum Markov chain} (QMC).  
We will be specifically interested in QMCs whose transition expectation is given by Equation~\eqref{trans exp 3}. Given a quantum Markov chain, we define the \textbf{quantum Markov state} $\psi$ on $M_d^{\ten \N}$ 
by the equation 
\begin{equation}\label{quantum Markov chain}
\psi(a_1\ten\cdots \ten a_n)%\ten\1\ten\cdots )
= \tr( \rho \mc E(a_{1}\ten  \mc E(a_{2}\ten  \mc E(\cdots  \mc E(a_{n}\ten\1_H )\cdots)))),
\end{equation}
for all $n\in\N$ and $a_1,\ldots, a_n\in M_d$. Notice that the assumption that the transition expectation $\mc E$ is 
unital implies that $\psi$ is compatible in the sense that 
$$ \psi(a_1\ten\cdots\ten a_n\ten \1_{\C^d})=\psi(a_1\ten\cdots\ten a_n),$$
for all $n\in\N$ and $a_1,\ldots, a_n\in M_d$. Moreover, it was shown in \cite[Proposition 3.7]{Accardi76} that the state $\psi$ on $M_d^{\ten \N}$ indeed exists.

The \textbf{joint correlations} for $\psi$ are given by the density matrices $\rho_n\in M_d^{\ten n}$ satisfying 
\begin{equation}\label{joint correlations}
\psi(a_1\ten\cdots \ten a_n)= \tr(\rho_n a_1\ten \cdots\ten a_n),
\end{equation}
for all $n\in\N$ and $a_1,\ldots, a_n\in M_d$. 

Putting the above pieces together, if $\Theta:\mc A\ra \mc A$ 
is a positive, unital map on the von~Neumann algebra 
$\mc A = B(H)$, $\rho$ is a density operator on 
$H$, and $\gamma = (\gamma_i)_{i=1}^d$ is an operational
partition of unity of $H$, then the 
\textbf{dynamical entropy of \bma{$(\mc A,\Theta,\rho)$} with respect to \bma{$\gamma$}} is given by 
\begin{equation}\label{dyn entropy 1}
h(\Theta, \rho, \gamma)=\limsup_{n\ra\infty}\frac{1}{n}S(\rho_n),
\end{equation}
where $S(\cdot)$ is the von-Neumann entropy and the transition expectation is given by Equation~\eqref{trans exp 3}.
Further, given a subalgebra $\mc B$ of $\mc A$, the \textbf{dynamical entropy of} $(\mc A,\Theta,\rho)$ with respect to $\mc B$ is given by 
\begin{equation*}\label{dyn entropy 2}
h_{\mc B}(\Theta,\rho)=\sup_{\gamma\subseteq \mc B} h(\Theta,\rho, \gamma).
\end{equation*}

\begin{Rem}
The dynamical entropy above is the generalized AF dynamical entropy as defined by the authors of \cite{KOW99}. 
The description we give is very similar to that of the AF dynamical entropy given by Tuyls in \cite{Tuyls98}; however, we do not restrict ourselves to $*$-automorphisms as does the standard construction of AF dynamical entropy. 
\end{Rem}

\subsection{Computation of the quantum dynamical entropy of the quantum dynamical system defined in Subsection~\ref{OQRW sect}}

Let $(\mc A, \Theta , \rho )$ be the quantum dynamical system 
defined by (\ref{E:MarkovTheta}) and (\ref{E:MarkovRho}), and
let $\gamma$ be the operational partition of unity defined
by (\ref{partition}). In this subsection, we will use the 
definitions given in Subsection~\ref{sect aow} in order to
 compute the quantum dynamical
entropy $h(\Theta , \rho , \gamma)$ and give its interpretation as the optimal compression rate of the quantum Markov ensemble that we coinsider.

First we define vectors 
$$
|s_n'\rangle := | s_n \rangle \otimes |n \rangle \in H_{\mc S} \ten 
\C^N \quad \text{for }n =1, \ldots , N,
$$
(which are orthonormal even though the vectors 
$(| s_n \rangle )_{n=1}^N \subseteq H_{\mc S}$ are not necessarily mutually orthogonal), and the state
\begin{equation}\label{qc state}
\rho ' := \sum_{n=1}^N p_n|s_n'\rangle\langle s_n'| =\sum_{n=1}^N p_n |s_n\rangle\langle s_n|\ten |n\rangle\langle n|
\in M_d\ten M_N=S_1(H_{\mc S}\ten \C^N).
\end{equation}
Before proceeding with the construction of the quantum Markov chain, we give a technical lemma which will be helpful later. 

\begin{Lem}\label{lemma1}
Let $\{\mc S^k\}_{k=1}^\infty$ be a stationary Markov ensemble, with symbol states $\{|s_n\rangle\}_{n=1}^N$, which is governed by a stationary Markov process $\X$ with transition matrix $P=(p_{n,m})$. 
Let $\Theta$, $\gamma$, $\mc E_{\gamma}$ and $\mc E_{\gamma,\Theta}$ be defined as above. Then the lifting 
$\mc E_{\gamma,\Theta}^\dagger: S_1(\C^N)\ra 
S_1(H_{\mc S}) \ten S_1(\C^N)$ acts on the diagonal states of $S_1(\C^N)$ in the following way.
$$
\mc E_{\gamma,\Theta}^\dagger (|n\rangle\langle n|) = \sum_{m=1}^N p_{m,n}  |s_m'\rangle\langle s_m'|,
$$
for each $|n\rangle$ in the orthonormal basis of $\C^N$. Moreover, $$\mc E_{\gamma,\Theta}^\dagger (\rho)=\rho'$$
where $\rho$ and $\rho'$ are given by 
Equations~\eqref{E:MarkovRho} and \eqref{qc state}, respectively.
\end{Lem}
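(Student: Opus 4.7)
The plan is to compute $\mc E_{\gamma,\Theta}^\dagger = \mc E_\gamma^\dagger\circ\Theta^\dagger$ explicitly on each diagonal operator $|n\rangle\langle n|$ and then collapse the result via Parseval's identity. Since $\Theta^\dagger$ (as in Equation~\eqref{E:MarkovThetaDagger}) sends $|n\rangle\langle n|$ to the diagonal density operator $\sum_{k=1}^N p_{k,n}|k\rangle\langle k|$, by linearity it suffices to compute $\mc E_\gamma^\dagger(|k\rangle\langle k|)$ for each $k$.

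To find $\mc E_\gamma^\dagger$ I will read off the formula from trace duality. Writing $\mc E_\gamma^\dagger(\sigma)=\sum_{k,\ell} |e_k\rangle\langle e_\ell|\otimes T_{k\ell}$ and pairing against the block basis element $|e_i\rangle\langle e_j|\otimes a$, one obtains $\tr(T_{j,i}\,a) = \tr(\sigma\,\gamma_i^* a\,\gamma_j)=\tr(\gamma_j\sigma\gamma_i^*\,a)$, and therefore $T_{k,\ell} = \gamma_k\,\sigma\,\gamma_\ell^*$. Because each $\gamma_i = \sum_n \langle e_i|s_n\rangle\, |n\rangle\langle n|$ is diagonal in the $\{|n\rangle\}$ basis of $\C^N$, the computation with $\sigma=|k\rangle\langle k|$ simplifies to $\gamma_i\,|k\rangle\langle k|\,\gamma_j^* = \langle e_i|s_k\rangle\,\langle s_k|e_j\rangle\,|k\rangle\langle k|$.

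The crucial step is then a Parseval identity inside the $M_d$ factor. Under the identification $\C^d\cong H_{\mc S}$ given by the fixed orthonormal basis $\{|e_i\rangle\}_{i=1}^d$, the matrix $\sum_{i,j=1}^d \langle e_i|s_k\rangle\,\langle s_k|e_j\rangle\,|e_i\rangle\langle e_j|$ factors as $\bigl(\sum_i \langle e_i|s_k\rangle |e_i\rangle\bigr)\bigl(\sum_j \langle s_k|e_j\rangle\langle e_j|\bigr)=|s_k\rangle\langle s_k|$. Hence $\mc E_\gamma^\dagger(|k\rangle\langle k|)=|s_k\rangle\langle s_k|\otimes |k\rangle\langle k| = |s_k'\rangle\langle s_k'|$, and composing with $\Theta^\dagger$ and relabelling the summation index $k\mapsto m$ yields the first assertion $\mc E_{\gamma,\Theta}^\dagger(|n\rangle\langle n|)=\sum_m p_{m,n}|s_m'\rangle\langle s_m'|$.

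For the second assertion, linearity gives $\mc E_{\gamma,\Theta}^\dagger(\rho)=\sum_n p_n\sum_m p_{m,n}|s_m'\rangle\langle s_m'|$, and switching the order of summation and invoking stationarity of $\X$ (which via $Pp=p$ reads $\sum_n p_{m,n}p_n=p_m$) collapses the $n$-sum to produce $\sum_m p_m|s_m'\rangle\langle s_m'|=\rho'$. No step is genuinely hard; the only subtlety I expect is notational: keeping track of the two roles played by the basis $\{|e_i\rangle\}$—once as a basis of $\C^d$ labelling the block-matrix entries in $M_d\otimes B(\C^N)$, and once as a basis of $H_{\mc S}$ in the definition of $\gamma$—so that Parseval's identity can correctly reassemble $|s_k\rangle\langle s_k|$ inside the first tensor factor.
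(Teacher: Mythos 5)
Your proposal is correct and follows essentially the same route as the paper's own proof: both derive the lifting formula $\mc E_\gamma^\dagger(\sigma)=\sum_{i,j}|e_i\rangle\langle e_j|\ten\gamma_i\sigma\gamma_j^*$ by trace duality, use the diagonality of the $\gamma_i$'s to reduce $\mc E_\gamma^\dagger(|k\rangle\langle k|)$ to $|s_k\rangle\langle s_k|\ten|k\rangle\langle k|$ via Parseval, compose with $\Theta^\dagger$ from Equation~\eqref{E:MarkovThetaDagger}, and invoke $Pp=p$ for the moreover part. The only cosmetic difference is that you verify the duality formula by pairing against rank-one block elements $|e_i\rangle\langle e_j|\ten a$ rather than a general $[a_{i,j}]_{i,j=1}^d$, which is the same computation.
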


The proof of Lemma~\ref{lemma1} can be found in the Appendix.

Next, we will consider the quantum Markov state $\psi$ given by the chain $\{\rho, \mc E_{\gamma,\Theta}\}$, where 
$\rho$ is given in Equation~\eqref{E:MarkovRho} and 
$\mc E_{\gamma,\Theta}$ is as in 
Equation~\eqref{trans exp 3}.  
Then, for each $k\in\N$ and 
$a_1,\ldots, a_k\in B(H_{\mc S})=M_d$, we have 
\begin{equation*}
\begin{split}
\psi(a_1\ten \cdots\ten a_k)&=\tr(\rho \mc E_{\gamma,\Theta}(a_1\ten \mc E_{\gamma,\Theta}(\cdots \mc E_{\gamma,\Theta}(a_k\ten \1_{\C^N})))) 
\quad\text{by Equation~\eqref{quantum Markov chain}} \\
&= \tr(\mc E_{\gamma, \Theta}^\dagger(\rho) a_1\ten \mc E_{\gamma,\Theta}(\cdots \mc E_{\gamma,\Theta}(a_k\ten \1_{\C^N}))) \\
&= \tr(\sum_{n_1=1}^Np_{n_1} |s_{n_1}'\rangle\langle s_{n_1}'| a_1\ten \mc E_{\gamma,\Theta}(a_2\ten \mc E_{\gamma,\Theta}(\cdots \mc E_{\gamma,\Theta}(a_k\ten \1_{\C^N}))))  \\
&= \sum_{n_1=1}^Np_{n_1}\tr(|s_{n_1}\rangle\langle s_{n_1}|a_1)\times \\
& \hspace{1in} \tr(|n_1\rangle\langle n_1|\mc E_{\gamma,\Theta}(a_2\ten \mc E_{\gamma,\Theta}(\cdots \mc E_{\gamma,\Theta}(a_k\ten \1_{\C^N}))))  \\ 
&= \sum_{n_1,n_2=1}^Np_{n_1}p_{n_2,n_1}\tr(|s_{n_1}\rangle\langle s_{n_1}|a_1)\tr(|s_{n_2}\rangle\langle s_{n_2}|a_2)\times \\ 
& \hspace{1in} \tr(|n_2\rangle\langle n_2|\mc E_{\gamma,\Theta}(a_3\ten \mc E_{\gamma,\Theta}(\cdots \mc E_{\gamma,\Theta}(a_k\ten \1_{\C^N})))) \\ 
& \quad \vdots \\
&= \sum_{n_1,\ldots,n_k=1}^Np_{n_1}\prod_{l=2}^k p_{n_l,n_{l-1}}\tr(|s_{n_1}\rangle\langle s_{n_1}|a_1)\cdots \tr(|s_{n_k}\rangle\langle s_{n_k}|a_k),
\end{split}
\end{equation*}
where the ``moreover" part of Lemma~\ref{lemma1} was used in the $3^{\text{rd}}$ equality, the fact $\tr(A\ten B)=\tr(A)\tr(B)$ was used in the $4^{\text{th}}$ equality and Lemma~\ref{lemma1} was used in the $5^{\text{th}}$ equality.

Thus, for each $k\in\N$, the density matrix $\rho_k$ which is defined by Equation~\eqref{joint correlations} is given by 
\begin{equation}\label{rhon=rhok}
\rho_k= \sum_{n_1,\ldots,n_k=1}^Np_{n_1}\prod_{l=2}^k p_{n_l,n_{l-1}} |s_{n_1}\cdots s_{n_k}\rangle \langle s_{n_1}\cdots s_{n_k}| = \rho_{S^k}.
\end{equation} 
Therefore, 
$$
h(\Theta, \rho, \gamma) =\limsup_{k\ra\infty}\frac{1}{k}S(\rho_k) =\limsup_{k\ra\infty}\frac{1}{k}S(\rho_{S^k})=\limsup_{k\ra\infty} EL_k^*(\rho_{\mc S^k}),
$$ 
where the first equality holds by the definition of the dynamical entropy in Equation~\eqref{dyn entropy 1} and the last equality follows from Equation~\eqref{bounds 2}. 
We have proved the following result. 

\begin{Thm}\label{main theorem}
Given any stationary Markov ensemble 
$\{\mc S^k\}_{k=1}^\infty$, the optimal average codeword
 length per symbol (via lossless coding) converges to the dynamical entropy of the above-described quantum dynamical system $(B(\C^N)),\Theta,\rho)$ with respect to the 
 operational partition of unity $\gamma$ defined in 
 Equation~\eqref{partition} in the following sense: 
$$ 
\limsup_{k\ra\infty} EL^*_k(\rho_{S^k})=\limsup_{k\ra\infty}\frac{1}{k}S(\rho_{\mc S^k})=h(\Theta^*,\rho_0, \gamma).
$$
\end{Thm}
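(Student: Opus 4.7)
The plan is to reduce the theorem to the identity $\rho_k = \rho_{\mc S^k}$ between the joint correlation density matrices of the quantum Markov state coming from the chain $\{\rho, \mc E_{\gamma,\Theta}\}$ and the ensemble states of the stationary Markov ensemble. Once this identification is established, the two equalities in the theorem follow at once from the sandwich bound \eqref{bounds 2} on $EL^*_k(\rho_{\mc S^k})$ and from the definition \eqref{dyn entropy 1} of the dynamical entropy.

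To compute $\rho_k$, I would unfold $\psi(a_1 \otimes \cdots \otimes a_k)$ using \eqref{quantum Markov chain} and then peel the outermost transition expectation off by trace duality, turning the outer factor into $\mc E_{\gamma,\Theta}^\dagger(\rho)$. Lemma~\ref{lemma1} supplies the two key identities $\mc E_{\gamma,\Theta}^\dagger(\rho) = \rho'$ and $\mc E_{\gamma,\Theta}^\dagger(|n\rangle\langle n|) = \sum_m p_{m,n} |s_m\rangle\langle s_m| \otimes |m\rangle\langle m|$. Because each summand carries a product structure on $H_{\mc S} \otimes \C^N$, the trace splits into a trace against $|s_{n_j}\rangle\langle s_{n_j}|$ on the $H_{\mc S}$ factor (which is what pairs with $a_j$) and a trace against $|n_j\rangle\langle n_j|$ on the $\C^N$ factor (which feeds the next iteration of the lifting). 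Iterating this $k$ times, the $H_{\mc S}$ factors assemble into $|s_{n_1}\cdots s_{n_k}\rangle\langle s_{n_1}\cdots s_{n_k}|$ and the scalar coefficients assemble into the Markov weight $p_{n_1}\prod_{l=2}^k p_{n_l,n_{l-1}}$. Reading off from \eqref{joint correlations} then yields $\rho_k = \rho_{\mc S^k}$.

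With that identification in hand, the remainder is purely formal: by \eqref{dyn entropy 1}, $h(\Theta, \rho, \gamma) = \limsup_k \tfrac{1}{k} S(\rho_k) = \limsup_k \tfrac{1}{k} S(\rho_{\mc S^k})$; and by \eqref{bounds 2}, the quantities $\tfrac{1}{k} S(\rho_{\mc S^k})$ and $EL^*_k(\rho_{\mc S^k})$ differ by at most $\tfrac{1}{k}$, so they share the same $\limsup$ as $k \to \infty$, giving both equalities of the theorem.

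The main obstacle is essentially concentrated in Lemma~\ref{lemma1}: one must verify that the specific choices of $\Theta$ in \eqref{E:MarkovTheta} and of $\gamma$ in \eqref{partition} interlock in just the right way so that $\mc E_{\gamma,\Theta}^\dagger$ preserves the diagonal subalgebra on the $\C^N$ factor while simultaneously embedding the classical Markov transitions $(p_{m,n})$ and the symbol states $|s_n\rangle$ into the $H_{\mc S}$ factor; it is precisely this mechanism that implants a classical Markov weight into the quantum iteration and makes $\rho_k$ collapse onto $\rho_{\mc S^k}$. Once the lemma is granted, the inductive unfolding above and the sandwich step are routine bookkeeping.
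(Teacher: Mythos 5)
Your proposal is correct and follows essentially the same route as the paper: both unfold the quantum Markov state $\psi$ via Equation~\eqref{quantum Markov chain}, peel off the outer lifting using the ``moreover'' part of Lemma~\ref{lemma1}, iterate the diagonal action $\mc E_{\gamma,\Theta}^\dagger(|n\rangle\langle n|)=\sum_m p_{m,n}|s_m'\rangle\langle s_m'|$ to obtain $\rho_k=\rho_{\mc S^k}$, and then conclude by combining the definition~\eqref{dyn entropy 1} of $h(\Theta,\rho,\gamma)$ with the sandwich bound~\eqref{bounds 2}. Your observation that the entire content is concentrated in Lemma~\ref{lemma1} accurately reflects the paper's structure, which likewise relegates that computation to the Appendix.
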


We recover the result of Schumacher \cite{Schumacher95} and Bellomo et.\ al \cite{BBHZ17} which states that the optimal codeword length per symbol for an i.i.d.\ prepared ensemble, 
$\{\mc S^k\}_{k=1}^\infty$, (via asymptotically lossless coding)
 is equal to the von Neumann entropy of the initial ensemble
  state, $\rho_{\mc S^1}$: 

\begin{Cor}\label{cor1}
Given a Markov process $\X$ made up of i.i.d.\ copies of a random variable $X$, the stationary Markov ensemble 
$\{\mc S^k\}_{k=1}^\infty$ governed by $\X$ has optimal codeword length per symbol (via lossless coding) given by  
$$
\lim_{k\ra\infty} EL^*_k(\rho_{\mc S^k})=S(\rho_{\mc S^1}).
$$
\end{Cor}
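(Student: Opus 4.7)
The plan is to derive Corollary~\ref{cor1} as a direct specialization of Theorem~\ref{main theorem}, exploiting the fact that i.i.d.\ preparation endows each ensemble state with a tensor-product form. First I would observe that an i.i.d.\ process is a (degenerate) stationary Markov process: the transition matrix has $p_{n,m}=p_n$ independent of $m$, and the pmf $(p_n)_{n=1}^N$ of $X$ is automatically invariant under any such $P$, so it serves as the initial distribution. Hence the stationary Markov ensemble $\{\mc S^k\}_{k=1}^\infty$ governed by $\X$ falls within the hypotheses of Theorem~\ref{main theorem}.

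Next I would record the tensor-product structure of the ensemble states. Under i.i.d.\ preparation the joint pmf factorizes as $p_{\X}(n_1,\ldots,n_k)=\prod_{j=1}^{k} p_{n_j}$, so the definition of the ensemble state simplifies to
$$
\rho_{\mc S^k}=\sum_{n_1,\ldots,n_k=1}^N\prod_{j=1}^{k} p_{n_j}\,|s_{n_1}\cdots s_{n_k}\rangle\langle s_{n_1}\cdots s_{n_k}|=\rho_{\mc S^1}^{\ten k}.
$$
By additivity of the von~Neumann entropy on product states, $S(\rho_{\mc S^k})=k\,S(\rho_{\mc S^1})$, and therefore the sequence $\tfrac{1}{k}S(\rho_{\mc S^k})$ is constantly equal to $S(\rho_{\mc S^1})$; in particular, the $\limsup$ appearing in Theorem~\ref{main theorem} is an honest limit.

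Finally I would plug this identity into the sandwich inequality~\eqref{bounds 2}, yielding
$$
S(\rho_{\mc S^1})\le EL_k^*(\rho_{\mc S^k})<S(\rho_{\mc S^1})+\frac{1}{k}
$$
for every $k\in\N$, and conclude by the squeeze theorem that $\lim_{k\ra\infty} EL_k^*(\rho_{\mc S^k})=S(\rho_{\mc S^1})$. There is no real obstacle: the whole content of the corollary is the reduction of the general Markov-ensemble formula of Theorem~\ref{main theorem} to the familiar i.i.d.\ case, and the only non-trivial input is the standard identity $S(\rho^{\ten k})=kS(\rho)$.
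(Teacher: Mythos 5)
Your proposal is correct and matches the paper's own proof essentially step for step: both identify the i.i.d.\ process with the degenerate transition matrix $p_{n,m}=p_n$, use $\rho_{\mc S^k}=\rho_{\mc S^1}^{\ten k}$ and additivity of von~Neumann entropy to get $\frac{1}{k}S(\rho_{\mc S^k})=S(\rho_{\mc S^1})$, and conclude via Theorem~\ref{main theorem}. Your only (harmless) variations are that you explicitly verify the invariance $Pp=p$ and route the final step through the sandwich inequality~\eqref{bounds 2} and the squeeze theorem, which is exactly how the paper's Theorem~\ref{main theorem} was established in the first place.
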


\begin{proof}
First notice that $\X$ is governed by the transition matrix $P=(p_{n,m})_{n,m=1}^N$ such that $p_{n,m}=p_n$, for every $1\le n,m\le N$, where $p=(p_n)_{n=1}^N$ is the initial distribution of $\X$. Therefore 
$$
\rho_{\mc S^k}=\rho_{\mc S^1}^{\ten k}, \quad \text{for each } k\in\N. 
$$
Using the construction from above and Equation~\eqref{rhon=rhok}, we have that 
\begin{equation*}
S(\rho_k)=S(\rho_{\mc S^k})=S(\rho_{\mc S^1}^{\ten k})=kS(\rho_{\mc S^1}),
\end{equation*}
where the last inequality follows by additivity of von Neumann entropy (see e.g.\ \cite[Equation 2.8]{Werhl78}). 
Therefore, by Theorem~\ref{main theorem}, we have 
$$\lim_{k\ra\infty} EL^*_k(\rho_{S^k})=\lim_{k\ra\infty} \frac{1}{k}S(\rho_{\mc S^k})=\lim_{k\ra\infty} \frac{1}{k} kS(\rho_{\mc S^1})=S(\rho_{\mc S^1}).$$
\end{proof}

Next we turn to a similar representation for general stochastic ensembles. We chose to present the case of the stationary Markov ensemble separately since the construction is simpler
than in the general case.

%%%%%%%%%%%%%%%%% STOCHASTIC ENSEMBLES

\subsection{A quantum dynamical system associated with a general stochastic ensemble}\label{OQRW sect2}

Consider a stochastic process $\X=(X_n)_{n=1}^\infty$ with values in $\{1,\ldots, N\}$ for some $N<\infty$ and with pmf 
$p$ i.e.\ for any $k \in \N$ and any $(n_1, \ldots , n_k) \in \{ 1, \ldots , N \}^k$ we have $p(n_1, \ldots , n_k)= \pr [X_1=n_1, \ldots , X_k=n_k]$.
Define the associated stochastic ensemble $\{\mc S^k\}_{k=1}^\infty$ by $\mc S^1=\{p(n), |s_n\rangle\}_{n=1}^N$ whose symbol states span $H_{\mc S}$ %$=H_{\mc S^1}$ 
and $\mc S^k=\{p(n_1,\ldots, n_k), |s_{n_1}\cdots s_{n_k}\rangle \}_{n_1,\ldots, n_k=1}^N$ whose symbol states span $H_\mc S^{\ten k}$ 
for each $k\in\N$. 
Again, setting $d=\dim(H_{\mc S})$, so that $d^k=\dim(H_{\mc S}^{\ten k})$ for each $k\in\N$, we define the following sequence of ensemble states which represents this stochastic ensemble:  
\begin{equation*}\label{stochastic rho1}
\rho_{S^1}= \sum_{n=1}^N p(n) |s_n\rangle\langle s_n| \in M_d=S_1(H_{\mc S})
\end{equation*}
and, for each $k\in\N$ with $k\ge 2$, define 
$\rho_{S^k} \in M_d^{\ten k}=S_1(H_{\mc S}^{\ten k}) $ by
\begin{equation*}\label{stochastic rhok}
\begin{split}
\rho_{S^k}&= \sum_{n_1,\ldots, n_k=1}^N p(n_1,\ldots, n_k) |s_{n_1}\cdots s_{n_k}\rangle\langle s_{n_1}\cdots s_{n_k}|  \\
&= \sum_{n_1=1}^N p(n_1) |s_{n_1}\rangle\langle s_{n_1}|\ten \cdots \ten \sum_{n_k=1}^N p(n_k |n_1,\ldots, n_{k-1}) |s_{n_k}\rangle\langle s_{n_k}|. %,\quad\text{for all }k\in \N. 
\end{split}
\end{equation*}

We define a quantum dynamical system associated to the above quantum ensemble as follows: Let $H= (\C^N)^\oplus= \oplus_{n=0}^\infty (\C^N)^{\otimes n}$ be the free Fock space
 of $\C^N$. Recall that $(\C^N)^{\otimes 0} =\C$ and we denote by 
$|\emptyset \rangle$ the vector $1 \in (\C^N)^{\otimes 0}$. 
We denote $\{ | n \rangle : n \in \{ 1, \ldots , N \} \}$ the standard orthonormal basis of $\C^N$ and 
$$
 \{ | \emptyset \rangle \} \cup \{ |\overline{n} \rangle : 
\bar n \in \{1, \ldots , N \}^k, k \in \N \}
 = \{ | \bar n \rangle : \bar n \in \{ 1, \ldots , N \}^+\}% = \cup_{k =0}^\infty 
%\{ 1 , \ldots , N \}^k \}
$$
the standard orthonormal basis of $H$. Before proceeding further we introduce two useful notations on the standard
orthonormal basis of $H$ which will be used later. 
If $\overline{n}=(n_1, \ldots , n_k) \in \{ 1, \ldots , N \}^k$ for some $k \in \N$, then we set
$$
\final (\overline{n})=n_k , \quad 
\pruned (\overline{n})= | n_1 , \ldots , n_{k-1} \rangle \text{ if }k \geq 2 \text{ and } \pruned (\overline{n}) = |\emptyset \rangle
\text{ if }k=1.
$$

Set  $\mc A$ to denote $C^*$-subalgebra of $B(H)$ generated 
by the identity operator and the 
rank-one operators of the form 
$| \bar n \rangle \langle \bar m |$
where  
$\bar n , \bar m \in \{ 1, \ldots , N\}^+$.
 Define a unital map 
$\Theta : \mc A \to \mc A$ by   
$$
\Theta (| \bar n \rangle \langle \bar m |) = 0 
\text{ if at least one of }\bar n , \bar m  \text{ is equal to }\emptyset ,
$$
$$
\Theta (|\bar n \rangle \langle \bar m |) = 0 
\text{ if } \bar n  \not = \bar m ,
$$
$$
\Theta (|\bar n \rangle \langle \bar n |) =
p(\final (\bar n )|\pruned (\bar n )) 
| \pruned (\bar n ) \rangle \langle \pruned (\bar n )| .
$$
It is easy to see that the map $\Theta$ is positive and 
(by definition) unital.
Finally we define a state 
$$
\rho = | \emptyset \rangle \langle \emptyset | ,
$$
on $\mc A$ and consider the dynamical system 
$( \mc A, \Theta , \rho )$.

We also define an operational partition of unity of the Hilbert space $H$ which is associated to the above quantum ensemble. Let  $\{e_i\}_{i=1}^d$ be a fixed orthonormal basis 
of $H_{\mc S}$, and let $\gamma = (\gamma_i)_{i=1}^d$ be
the operational partition of unity defined by
 by 
\begin{equation*}\label{stochastic partition 2}
\gamma_i(|\bar n\rangle)= 
\langle e_i | s_{\final (\bar{n})}\rangle |\bar n\rangle
\text{ for } \bar n \in \{1,\ldots, N\}^k, \quad 
\gamma_i (| \emptyset \rangle ) = \langle e_i | s_1 \rangle |\emptyset \rangle ,
\end{equation*}
for each  $i\in\{1,\ldots , d\}$. Notice that
\begin{equation*}
\sum_{i=1}^d \gamma_i^*\gamma_i (|\bar n\rangle) = \sum_{i=1}^d |\langle e_i, s_{\final (\bar n )}\rangle|^2 |\bar n\rangle 
= \|s_{n_{k-1}}\|^2 |\bar n\rangle = |\bar n\rangle 
\end{equation*}
for $\bar n \in \cup_{k \in \N} \{ 1, \ldots , N \}^k$. 
Similarly,
\begin{equation*}
\sum_{i=1}^d \gamma_i^*\gamma_i (| \emptyset \rangle) = \sum_{i=1}^d |\langle e_i, s_1 \rangle|^2 |\bar n\rangle 
= \|s_1 \|^2 | \emptyset \rangle = |\emptyset \rangle 
\end{equation*}
and thus $\gamma$ is indeed an operational partition of unity on $H$. 
In order to unify the last three displayed formulas, we define
$\final (\emptyset )=1$ and thus we can write
\begin{equation}\label{E:GeneralGamma}
\gamma_i (|\bar n \rangle )= \langle e_i | s_{\final (\bar n )} \rangle | \bar n \rangle \quad \text{for all }\bar n \in \{ 1, \ldots , N \}^+ .
\end{equation} 
Let $\mc E_\gamma$ and $\mc E_{\gamma , \Theta}$ be 
the transition expectation maps from $M_d\ten \mc A$ to 
$ \mc A$ by 
Equation~\eqref{trans exp 1} and \eqref{trans exp 3}, respectively.

Before stating the next result we introduce some notation.

\begin{Not}
For each $k\in\N$ and $\bar n \in\{1,\ldots,N\}^k$ we set 
$$
|s_{\bar n}'\rangle = |s_{\final (\bar n )}\rangle \ten |\bar n\rangle \in H_{\mc S}\ten (\C^N)^{\ten k} \text{ and } 
|s_{\emptyset}'\rangle = |s_1 \rangle\ten |\emptyset\rangle \in H_{\mc S}\ten (\C^N)^{\ten 0} .
$$
Also for $\bar n = (n_1, \ldots , n_k) \in \{ 1, \ldots , N \}^k$ and $\ell \in \{ 1, \ldots , N \}$ we set
$$
\bar n \circ\ell = (n_1, \ldots , n_k , \ell ) \text{ and } \emptyset \circ \ell = \ell .
$$
\end{Not}

We now state a technical lemma which will be used in the 
proof of the main result.

\begin{Lem}\label{lemma2}
	Let $\{\mc S^k\}_{k=1}^\infty$ be a stochastic ensemble
	 with symbol states $\{|s_n\rangle\}_{n=1}^N$ which is governed by a stochastic process $\X$ with pmf $p$. 
	Let $H$, $\mc A$, $\Theta$, $\rho$, $\gamma$, $\mc E_{\gamma}$ and $\mc E_{\gamma,\Theta}$ be defined as above. 
	Then the lifting $\mc E_{\gamma,\Theta}^\dagger: 
	\Sigma (\mc A)\ra M_d \ten \Sigma (\mc A)$ acts on the diagonal states of $S_1(H)$ in the following way: 
$$
\mc E_{\gamma,\Theta}^\dagger (|\bar n\rangle\langle \bar n|) = \sum_{k =1}^N p(k |\bar n)  |s_{\bar n\circ k}'\rangle\langle s_{\bar n\circ k}'|,
$$
for each $|\bar n\rangle $ in the standard orthonormal basis of $H$ where we adopt the convention $p(k|\emptyset) :=p(k)=\pr [X_1=k ]$
for $k \in \{ 1, \ldots , N \}$. 
Moreover, 
$$
\mc E_{\gamma,\Theta}^\dagger (\rho)=\sum_{k=1}^N p(k) |s_{k}'\rangle\langle s_{k}'|.
$$ 
\end{Lem}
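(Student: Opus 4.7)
The plan is to compute $\mc E_{\gamma,\Theta}^\dagger=\mc E_\gamma^\dagger\circ\Theta^\dagger$ by evaluating each adjoint separately on diagonal rank-one operators in the standard basis, then composing. Since both $\Theta$ and $\mc E_\gamma$ act so as to preserve diagonality in the distinguished basis, trace duality reduces everything to matching matrix entries against basis elements of the form $|\bar m\rangle\langle\bar m|$ in $\mc A$ or $|e_i\rangle\langle e_j|\ten y$ in $M_d\ten\mc A$.

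First I would compute $\Theta^\dagger(|\bar n\rangle\langle\bar n|)$ by pairing against $|\bar m\rangle\langle\bar m|$. The defining relations of $\Theta$ give
\[
\tr\bigl(\Theta^\dagger(|\bar n\rangle\langle\bar n|)\,|\bar m\rangle\langle\bar m|\bigr)=\tr\bigl(|\bar n\rangle\langle\bar n|\,\Theta(|\bar m\rangle\langle\bar m|)\bigr)=p(\final(\bar m)\mid\pruned(\bar m))\,\delta_{\bar n,\pruned(\bar m)},
\]
which is nonzero precisely when $\bar m=\bar n\circ k$ for some $k\in\{1,\ldots,N\}$, with value $p(k\mid\bar n)$. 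Because $\Theta$ kills off-diagonal basis elements, $\Theta^\dagger(|\bar n\rangle\langle\bar n|)$ is itself diagonal, so
\[
\Theta^\dagger(|\bar n\rangle\langle\bar n|)=\sum_{k=1}^N p(k\mid\bar n)\,|\bar n\circ k\rangle\langle\bar n\circ k|,
\]
valid uniformly for $\bar n\in\{1,\ldots,N\}^+$ under the convention $p(k\mid\emptyset)=p(k)$. Next I would compute $\mc E_\gamma^\dagger(|\bar m\rangle\langle\bar m|)$ by pairing against an arbitrary $Y=\sum_{i,j}|e_i\rangle\langle e_j|\ten y_{ij}\in M_d\ten\mc A$; using Equation~\eqref{E:GeneralGamma} in the form $\gamma_j|\bar m\rangle=\langle e_j|s_{\final(\bar m)}\rangle|\bar m\rangle$, I get
\[
\tr\bigl(|\bar m\rangle\langle\bar m|\,\mc E_\gamma(Y)\bigr)=\sum_{i,j}\overline{\langle e_i|s_{\final(\bar m)}\rangle}\,\langle e_j|s_{\final(\bar m)}\rangle\,\langle\bar m|y_{ij}|\bar m\rangle.
\]
Matching this against $\tr\bigl((\rho_M\ten\rho_{\mc A})Y\bigr)=\sum_{i,j}\langle e_j|\rho_M|e_i\rangle\,\tr(\rho_{\mc A}y_{ij})$ for all $Y$ forces $\rho_{\mc A}=|\bar m\rangle\langle\bar m|$ and $\rho_M=|s_{\final(\bar m)}\rangle\langle s_{\final(\bar m)}|$, so that
\[
\mc E_\gamma^\dagger(|\bar m\rangle\langle\bar m|)=|s_{\final(\bar m)}\rangle\langle s_{\final(\bar m)}|\ten|\bar m\rangle\langle\bar m|.
\]

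Composing the two displays, using $\final(\bar n\circ k)=k$ and the definition $|s'_{\bar n\circ k}\rangle=|s_k\rangle\ten|\bar n\circ k\rangle$, yields the first claim. The ``moreover" part is then the specialization $\bar n=\emptyset$ applied to $\rho=|\emptyset\rangle\langle\emptyset|$, using the conventions $\emptyset\circ k=k$ and $p(k\mid\emptyset)=p(k)$. The only real subtlety is bookkeeping at the root $|\emptyset\rangle$: $\Theta$ annihilates $|\emptyset\rangle\langle\emptyset|$ on the input side, yet $\Theta^\dagger(|\emptyset\rangle\langle\emptyset|)$ must reproduce the initial distribution $(p(k))_{k=1}^N$ on the singletons, and the conventions $\final(\emptyset)=1$ and $p(k\mid\emptyset)=p(k)$ are precisely what makes a single uniform formula hold across all $\bar n\in\{1,\ldots,N\}^+$.
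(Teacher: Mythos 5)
Your proof is correct and follows essentially the same route as the paper's: factor $\mc E_{\gamma,\Theta}^\dagger=\mc E_\gamma^\dagger\circ\Theta^\dagger$, compute each dual on the diagonal basis states $|\bar n\rangle\langle \bar n|$, compose using $\final(\bar n\circ k)=k$, and obtain the ``moreover'' part by specializing to $\rho=|\emptyset\rangle\langle\emptyset|$ with the conventions $p(k|\emptyset)=p(k)$ and $\emptyset\circ k=k$. The only difference is cosmetic: you spell out via trace duality the formula for $\Theta^\dagger(|\bar n\rangle\langle\bar n|)$ (including the bookkeeping at the root $|\emptyset\rangle$) that the paper asserts as ``easy to see,'' and you derive $\mc E_\gamma^\dagger(|\bar m\rangle\langle\bar m|)$ directly by pairing against arbitrary $Y\in M_d\ten\mc A$ rather than citing the general lifting formula established in the proof of Lemma~\ref{lemma1}.
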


The proof of Lemma~\ref{lemma2} can be found in the Appendix.

Next, we will consider the quantum Markov state $\psi$ given by the chain $\{\rho, \mc E_{\gamma,\Theta}\}$. For each 
$k\in\N$ and $a_1,\ldots, a_k\in B(H_{\mc S})=M_d$, we have 
%\begin{equation*}
%\begin{split}
\begin{eqnarray*}
\psi(a_1 \ten &\cdots&\ten a_k)  \\
&=& \tr(\rho \mc E_{\gamma,\Theta}(a_1\ten \mc E_{\gamma,\Theta}(\cdots \mc E_{\gamma,\Theta}(a_k\ten \1_{H})))) 
\quad\text{by Equation~\eqref{quantum Markov chain}} \\
&=& \tr(\mc E_{\gamma, \Theta}^\dagger (\rho) a_1\ten \mc E_{\gamma,\Theta}(\cdots \mc E_{\gamma,\Theta}(a_k\ten \1_{H}))) \\
&=& \tr(\sum_{n_1=1}^Np(n_1) |s_{n_1}'\rangle\langle s_{n_1}'| a_1\ten \mc E_{\gamma,\Theta}(a_2\ten \mc E_{\gamma,\Theta}(\cdots \mc E_{\gamma,\Theta}(a_k\ten \1_{H}))))  \\
&=& \sum_{n_1=1}^Np(n_1)\tr(|s_{n_1}\rangle\langle s_{n_1}|a_1)\tr(|n_1\rangle\langle n_1|\mc E_{\gamma,\Theta}(a_2\ten \mc E_{\gamma,\Theta}(\cdots \mc E_{\gamma,\Theta}(a_k\ten \1_{H}))))  \\ 
&=& \sum_{n_1, n_2=1}^N p(n_1) p(n_2|n_1) \tr(|s_{n_1}\rangle\langle s_{n_1}|a_1)\tr(|s_{n_2}\rangle\langle s_{n_2}|a_2)\times \\ 
&\phantom{a}&  \qquad  \tr(|n_1, n_2\rangle\langle n_1, n_2|\mc E_{\gamma,\Theta}(a_3\ten \mc E_{\gamma,\Theta}(\cdots \mc E_{\gamma,\Theta}(a_k\ten \1_{H})))) \\ 
&=& \sum_{n_1,n_2=1}^Np(n_1, n_2)\tr(|s_{n_1}\rangle\langle s_{n_1}|a_1)\tr(|s_{n_2}\rangle\langle s_{n_2}|a_2)\times \\ 
&\phantom{a}&\qquad  \tr(|n_1, n_2\rangle\langle n_1, n_2|\mc E_{\gamma,\Theta}(a_3\ten \mc E_{\gamma,\Theta}(\cdots \mc E_{\gamma,\Theta}(a_k\ten \1_{H})))) \\ 
& \vdots & \\
&=& \sum_{n_1,\ldots,n_k=1}^Np(n_1,\ldots, n_k)\tr(|s_{n_1}\rangle\langle s_{n_1}|a_1)\cdots \tr(|s_{n_k}\rangle\langle s_{n_k}|a_k),
\end{eqnarray*}
%\end{split}
%\end{equation*}
where the ``moreover" part of Lemma~\ref{lemma2} was used in the $3^{\text{rd}}$ equality, the fact $\tr(A\ten B)=\tr(A)\tr(B)$ was used in the $4^{\text{th}}$ equality and Lemma~\ref{lemma2} was used in the $5^{\text{th}}$ equality. 

Thus, for each $k\in\N$, the density matrix $\rho_k$ which is defined in Equation~\eqref{joint correlations} is given by 
\begin{equation*}\label{stochastic rhon=rhok}
\rho_k= \sum_{n_1,\ldots,n_k=1}^Np(n_1,\ldots, n_k) |s_{n_1}\cdots s_{n_k}\rangle \langle s_{n_1}\cdots s_{n_k}| = \rho_{S^k}.
\end{equation*} 
Therefore, 
$$
h(\Theta , \rho , \gamma) =\limsup_{k\ra\infty}\frac{1}{k}S(\rho_k) =\limsup_{k\ra\infty}\frac{1}{k}S(\rho_{S^k})=\limsup_{k\ra\infty} EL_k^*(\rho_{\mc S^k}),$$ where the first equality holds by the definition of the dynamical entropy in Equation~\eqref{dyn entropy 1} and the last equality follows from Equation~\eqref{bounds 2}. We have proved the following theorem. 

\begin{Thm}\label{main theorem 2}
	Given any stochastic ensemble $\{\mc S^k\}_{k=1}^\infty$, the optimal average codeword length per symbol (via lossless coding) converges to the dynamical entropy of the above-described quantum dynamical system $(\mc A, \Theta ,\rho)$ with respect to the operational partition of unity $\gamma$ defined by Equation~\eqref{E:GeneralGamma} in the following sense: 
$$ 
\limsup_{k\ra\infty} EL^*_k(\rho_{S^k})=\limsup_{k\ra\infty}\frac{1}{k}S(\rho_{\mc S^k})=h(\Theta ,\rho, \gamma).
$$
\end{Thm}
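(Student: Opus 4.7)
The plan is to mirror the strategy used for Theorem~\ref{main theorem}: first identify the joint correlation density matrices $\rho_k$ associated with the quantum Markov chain $\{\rho,\mc E_{\gamma,\Theta}\}$ with the stochastic ensemble states $\rho_{\mc S^k}$, and then sandwich $EL^*_k(\rho_{\mc S^k})$ between $\tfrac{1}{k}S(\rho_{\mc S^k})$ and $\tfrac{1}{k}S(\rho_{\mc S^k})+\tfrac{1}{k}$ via Equation~\eqref{bounds 2}. The new ingredient needed from the stationary Markov case is Lemma~\ref{lemma2}, which records how the lifting $\mc E_{\gamma,\Theta}^\dagger$ acts on the diagonal rank-one operators $|\bar n\rangle\langle \bar n|$ in the Fock-space basis, producing the conditional probabilities $p(k|\bar n)$ demanded by a general (non-Markov) process.

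First, I would compute $\psi(a_1\otimes\cdots\otimes a_k)$ from Equation~\eqref{quantum Markov chain} by peeling off one tensor factor at a time. Using trace duality to move the outer $\mc E_{\gamma,\Theta}$ across to $\rho$, the ``moreover'' part of Lemma~\ref{lemma2} gives $\mc E_{\gamma,\Theta}^\dagger(\rho)=\sum_{n_1}p(n_1)|s_{n_1}'\rangle\langle s_{n_1}'|$. Since $|s_{n_1}'\rangle=|s_{n_1}\rangle\otimes|n_1\rangle$, the trace factorizes into $\tr(|s_{n_1}\rangle\langle s_{n_1}|a_1)$ times a residual trace against $|n_1\rangle\langle n_1|$. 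Iterating, at step $j$ the residual carrier is always a pure diagonal $|\bar n\rangle\langle\bar n|$ with $\bar n=(n_1,\ldots,n_{j-1})$ encoding the full history, and the first part of Lemma~\ref{lemma2} produces $\mc E_{\gamma,\Theta}^\dagger(|\bar n\rangle\langle\bar n|)=\sum_{n_j}p(n_j|\bar n)|s_{\bar n\circ n_j}'\rangle\langle s_{\bar n\circ n_j}'|$. The telescoping product $p(n_1)p(n_2|n_1)\cdots p(n_k|n_1,\ldots,n_{k-1})=p(n_1,\ldots,n_k)$ then gives
\[
\psi(a_1\otimes\cdots\otimes a_k)=\sum_{n_1,\ldots,n_k=1}^N p(n_1,\ldots,n_k)\prod_{j=1}^k\tr(|s_{n_j}\rangle\langle s_{n_j}|a_j),
\]
so by Equation~\eqref{joint correlations}, $\rho_k=\rho_{\mc S^k}$.

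With this identification in hand, the conclusion is immediate: the definition of dynamical entropy in Equation~\eqref{dyn entropy 1} yields $h(\Theta,\rho,\gamma)=\limsup_k\tfrac{1}{k}S(\rho_{\mc S^k})$, while Equation~\eqref{bounds 2} forces $\limsup_k EL^*_k(\rho_{\mc S^k})$ to coincide with $\limsup_k\tfrac{1}{k}S(\rho_{\mc S^k})$ because the gap $\tfrac{1}{k}$ vanishes.

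The main obstacle is to make sure that the iterative unfolding is genuinely legitimate in the general (non-Markov) setting, where unlike in Subsection~\ref{OQRW sect} the Hilbert space must be the full Fock space $(\C^N)^\oplus$ rather than $\C^N$. The design of $\Theta$ via $\pruned$ and $\final$ is precisely what guarantees that each residual state remains a diagonal rank-one projector carrying the \emph{full} history, which is essential so that the next lifting step can retrieve the correct conditional probability $p(n_j|n_1,\ldots,n_{j-1})$; verifying this invariant along the iteration, which is encapsulated in Lemma~\ref{lemma2}, is where the real work sits. Once that lemma is available the remainder of the argument is a direct transcription of the stationary Markov proof.
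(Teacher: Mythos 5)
Your proposal is correct and follows essentially the same route as the paper: the paper likewise unfolds $\psi(a_1\otimes\cdots\otimes a_k)$ iteratively using the ``moreover'' part of Lemma~\ref{lemma2} for the first step and its diagonal formula $\mc E_{\gamma,\Theta}^\dagger(|\bar n\rangle\langle \bar n|)=\sum_{k}p(k|\bar n)|s_{\bar n\circ k}'\rangle\langle s_{\bar n\circ k}'|$ thereafter, telescoping $p(n_1)p(n_2|n_1)\cdots p(n_k|n_1,\ldots,n_{k-1})=p(n_1,\ldots,n_k)$ to identify $\rho_k=\rho_{\mc S^k}$, and then concluding via Equations~\eqref{dyn entropy 1} and \eqref{bounds 2}. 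Your observation that the real work lies in Lemma~\ref{lemma2} --- i.e., that the Fock-space construction of $\Theta$ via $\pruned$ and $\final$ keeps the residual state a diagonal rank-one projector carrying the full history --- matches precisely how the paper structures its argument.
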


It should be noted that Theorem~\ref{main theorem} can be considered a corollary of Theorem~\ref{main theorem 2}. 
However, we have presented it separately since the construction is simpler in the case of Markov ensembles. 

\subsection{Examples}
Examples of quantum sources that produce not-necessarily 
statistically independent quantum symbols have been considered in the literature.
In \cite{KingL98} examples of quantum sources that produce 
statistically independent symbols (called ``Bernoulli sources")
as well as quantum sources producing not-necessarily 
statistically independent quantum symbols are considered.
In \cite{BF02} the authors consider quantum Morse codes 
as an example of quantum communication since quantum
data compression can be viewed as a special case of noiseless
quantum communication. In communications, either classical
or quantum, the assumption of statistical independence of
the symbols to be communicated, gives a serious restriction
to the content of information which is communicated. Thus the 
need of considering quantum sources emitting not-necessarily 
statistical independent quantum symbols, naturally arises.

We have already shown in Corollary~\ref{cor1} that we can recover the result of \cite{Schumacher95} and  \cite{BBHZ17} which states that the optimal codeword length per symbol for an i.i.d.\ prepared ensemble (i.e.\ Bernoulli sources)  is equal to the von Neumann entropy of the initial ensemble state. 
 First we illustrate that we can recover Theorem~\ref{expected per symbol} from Theorem~\ref{main theorem 2}. 

\begin{Ex}[Classical-Quantum Codes]
Let $S=\{n\}_{n=1}^d$ be a classical symbol set of cardinality $d$ for some $d\in\N$, $C:S\ra A^+$ be a uniquely decodable code into strings from the binary alphabet $A$, and $\X=(X_n)_{n=1}^\infty$ be a stochastic process governing the frequency of symbols from the symbol set $S$. Let $H_\mc S$ be a $d$-dimensional Hilbert space spanned by an orthonormal basis $\{|s_n\rangle\}_{n=1}^d$ and define the stochastic ensemble as usual by $\mc S^k=\{p(n_1,\ldots, n_k), |s_{n_1}\cdots s_{n_k}\rangle\}_{n_1,\ldots, n_k=1}^d$, where $p$ denotes the pmf of $\X$. Then since the $|s_n\rangle$'s are orthonormal it is easy to see that the ensemble states $\rho_{\mc S^k}$ are diagonal, for each $k\in\N$. Hence the optimal average codeword length per symbol for the stochastic ensemble, given by $h(\mc A, \Theta, \rho)$ in Theorem~\ref{main theorem 2}, is exactly equal to the entropy rate of the stochastic process (see Theorem~\ref{expected per symbol}). 
\end{Ex} 

Next we illustrate here the usefulness of Theorem~\ref{main theorem} on non-Bernoulli sources with two simple examples.
For each of the two examples, the Hilbert space 
$H_{\mc S}$ has 
dimension $d=2$ and an orthonormal basis 
$\{ | e_i \rangle \}_{i=1}^2$.

\begin{Ex}
For the second example consider the normalized non-orthogonal symbols $|s_1 \rangle = |e_1 \rangle$, 
$| s_2 \rangle = | e_2 \rangle$, 
$|s_3 \rangle= \frac{1}{\sqrt{2}}( | e_1 \rangle + |e_2 \rangle )$, 
(i.e. the Bell state $|+\rangle$), and 
$| s_4 \rangle = \frac{1}{\sqrt{2}} (|e_1 \rangle - |e_2 \rangle )$,
(i.e. the Bell state $|- \rangle$), which span $H_{\mc S}=\C^2$ (i.e. we consider $N=4$ in the setting
described in Subsection~\ref{Q Data Compression}). Consider the transition matrix 
$$
P= (p_{i,j})_{i,j =1}^4 = \left( 
\begin{array}{cccc}
\frac12 & \frac12 & 0 & 0  \\
\frac12 & \frac12 & 0 & 0 \\
0 & 0 & \frac12 & \frac{1}{2} \\
0 & 0 & \frac12 & \frac{1}{2} 
\end{array} \right)
$$
where $p_{i,j}$ represents the conditional probability that 
the quantum source emits $|s_i \rangle$ right after it emits 
$| s_j \rangle$. A (non-unique) fixed probability distribution of $P$
is equal to the column vector $( \frac14 \quad  \frac14 \quad \frac14 \quad \frac14 )^T$.
Consider the quantum dynamical system 
$( \mc A, \Theta , \rho )$ where $\mc A = B(\C^4)$,
$\Theta : \mc A \to \mc A$ is given by
$$
\Theta (|k \rangle \langle \ell | ) = \delta_{k, \ell} \sum_{i=1}^3
p_{\ell , i} |i \rangle \langle i | ,
$$
i.e.
$$
\Theta \left( (a_{i,j})_{i,j=1}^4 \right)
= \left( \begin{array}{cccc}
\frac12 (a_{1,1} + a_{2,2}) & \frac12 (a_{1,1}+ a_{2,2}) & 0 & 0 \\
\frac12 (a_{1,1} + a_{2,2}) & \frac12 (a_{1,1}+ a_{2,2}) & 0 & 0 \\
0 & 0 & \frac12 (a_{1,1} + a_{2,2}) & \frac12 (a_{3,3}+ a_{4,4})\\
0 & 0 & \frac12 (a_{1,1} + a_{2,2}) & \frac12 (a_{3,3}+ a_{4,4})
\end{array} \right),
$$
and
$$
\rho= \frac14 |1 \rangle \langle 1 | + \frac14 |2 \rangle \langle 2|
+ \frac14 | 3 \rangle \langle 3 | +\frac14 | 4 \rangle \langle 4 |
= \left( \begin{array}{cccc}
\frac14 & 0 & 0 & 0 \\ 
0 & \frac14 & 0 & 0 \\ 
0 & 0 & \frac14 & 0 \\
0 & 0 & 0 & \frac14
\end{array} \right) .
$$
Consider the operational partition of unity $\gamma = (\gamma_i )_{i=1}^2$ for $\C^4$ given as in Equation~\eqref{partition}. 
%$$
%\gamma_1=\langle e_1 | s_1 \rangle |1 \rangle \langle 1| +
%\langle e_1 |s_2 \rangle | 2 \rangle \langle 2| + \langle e_1 | s_3 \rangle |3 \rangle \langle 3| + \langle e_1 | s_4 \rangle |4 \rangle \langle 4| =
%\left( \begin{array}{cccc}
%1 & 0 & 0 & 0 \\ 
%0 & 0 & 0 & 0 \\ 
%0 & 0 & \frac{1}{\sqrt{2}} & 0\\
%0 & 0 & 0 & \frac{1}{\sqrt{2}}
%\end{array}  \right)
%$$
%and 
%$$
%\gamma_2=\langle e_2 | s_1 \rangle |1 \rangle \langle 1| +
%\langle e_2 |s_2 \rangle | 2 \rangle \langle 2| + \langle e_2 | s_3 \rangle | 3 \rangle \langle 3 | + \langle e_2 | s_4 \rangle |4 \rangle \langle 4|=
%\left( \begin{array}{cccc}
%0 & 0 & 0 & 0\\ 
%0 & 1 & 0 & 0 \\
% 0 & 0 & \frac{1}{\sqrt{2}} & 0\\
%  0 & 0 & 0 & -\frac{1}{\sqrt{2}}
%\end{array}  \right) .
%$$
Theorem~\ref{main theorem} states that the optimal average 
codeword length per symbol via lossless coding is equal to the 
the dynamical entropy of $\Theta$ with respect to the 
partition $\gamma$ when measured using the state $\rho$,
i.e. $h(\Theta , \rho , \gamma )$. We can compute the joint 
correlations $(\rho_n)_{n=1}^\infty$ of this dynamical system
using Equation~(\ref{rhon=rhok}) to see that 
$$
\rho_1 =  \frac14 \sum_{n=1}^4  |s_n \rangle \langle s_n | 
%+ \frac13 | s_2 \rangle \langle s_2 | + \frac13 | s_3 \rangle \langle s_3 | 
= \left( \begin{array}{cc}
\frac12 & 0 \\ 0 & \frac12
\end{array} \right) ,
$$
%since 
%$$
%| s_1 \rangle \langle s_1 |= 
%\left( \begin{array}{cc} 1 & 0 \\ 0 & 0 \end{array} \right),
%\quad
%| s_2 \rangle \langle s_2 |= 
%\left( \begin{array}{cc} 0 & 0 \\ 0 & 1 \end{array} \right),
%$$
%$$
%| s_3 \rangle \langle s_3 |= \frac12
%\left( \begin{array}{cc} 1 & 1 \\ 1 & 1 \end{array} \right),
%\quad \text{and} \quad
%| s_4 \rangle \langle s_4 | = \frac12
%\left( \begin{array}{cc} 1 & -1 \\ -1 & 1 \end{array} \right) .
%$$
%Similarly,
%$$
%\rho_2 =\frac{1}{8} \sum_{i,j=1}^2 |s_i \rangle \langle s_i | \otimes | s_j \rangle \langle s_j| + \frac{1}{8} \sum_{k,\ell =3}^4
%| s_k \rangle \langle s_k | \otimes |s_\ell \rangle \langle s_\ell | 
%= \frac14 \1_{\C^4} .
%$$
and in general 
$$
\rho_k = \frac{1}{2^{k+1}} \sum_{n_1 , \ldots , n_k =1}^2 
| s_{n_1} \rangle \langle s_{n_1} | \otimes \cdots \otimes | s_{n_k} \rangle \langle s_{n_k} | + 
\frac{1}{2^{k+1}} \sum_{m_1 , \ldots , m_k =1}^2 
| s_{m_1} \rangle \langle s_{m_1} | \otimes \cdots \otimes | s_{m_k} \rangle \langle s_{m_k} | .
$$
It is easy to verify that
$$
\sum_{n_1 , \ldots , n_k =1}^2 
| s_{n_1} \rangle \langle s_{n_1} | \otimes \cdots \otimes | s_{n_k} \rangle \langle s_{n_k} | =
\sum_{m_1 , \ldots , m_k =3}^4 
| s_{m_1} \rangle \langle s_{m_1} | \otimes \cdots \otimes | s_{m_k} \rangle \langle s_{m_k} | 
= \1_{\C^{2^k}},
$$
by applying these sums to bases of $\C^{2^k}$ that are formed
by taking the tensor products of $|s_i \rangle$'s. Thus $\rho_k = \frac{1}{2^k} \1_{\C^{2^k}}$ and hence $S(\rho_k) =k$, for each $k\in\N$. Therefore by Theorem~\ref{main theorem} we obtain that the optimal average compression rate per symbol for the above quantum ensemble
is equal to 1 qubit.

\end{Ex}

\begin{Ex}
Consider the normalized 
non-orthogonal
symbol states $|s_1 \rangle =  |e_1 \rangle $, 
$|s_2 \rangle = - \frac12 |e_1 \rangle + \frac{\sqrt{3}}{2}|e_2 \rangle $
and $| s_3 \rangle = - \frac12 |e_1 \rangle - \frac{\sqrt{3}}{2}|e_2 \rangle $ which 
span $H_{\mc S}$ (i.e. we consider $N=3$ in the setting described in Subsection~\ref{Q Data Compression}). Consider the transition matrix 
$$
P= (p_{i,j})_{i,j =1}^3 = \left( 
\begin{array}{ccc}
0 & \frac12 & \frac12 \\
\frac12 & 0 & \frac12 \\
\frac12 & \frac12 & 0
\end{array} \right)
$$
where $p_{i,j}$ represents the conditional probability that 
the quantum source emits $|s_i \rangle$ right after it emits 
$| s_j \rangle$. The unique fixed probability distribution of $P$
is equal to the column vector $( \frac13 \quad  \frac13 \quad \frac13)^T$.
Consider the quantum dynamical system 
$( \mc A, \Theta , \rho )$ where $\mc A = B(\C^3)$,
$\Theta : \mc A \to \mc A$ is given by
$$
\Theta (|k \rangle \langle \ell | ) = \delta_{k, \ell} \sum_{i=1}^3
p_{\ell , i} |i \rangle \langle i | ,
$$
i.e.
$$
\Theta \left( (a_{i,j})_{i,j=1}^3 \right)
= \left( \begin{array}{ccc}
\frac12 (a_{2,2}+a_{3,3}) & 0 & 0 \\
0 & \frac12 (a_{1,1}+a_{3,3}) & 0 \\
0 & 0 & \frac12 (a_{1,1}+a_{3,3})
\end{array} \right),
$$
and
$$
\rho= \frac13 |1 \rangle \langle 1 | + \frac13 |2 \rangle \langle 2|
+ \frac13 | 3 \rangle \langle 3 | = \left( \begin{array}{ccc}
\frac13 & 0 & 0 \\ 0 & \frac13 & 0 \\ 0 & 0 & \frac13 
\end{array} \right) .
$$
Consider the operational partition of unity $\gamma = (\gamma_i )_{i=1}^2$ for $\C^3$ given by
$$
\gamma_1=\langle e_1 | s_1 \rangle |1 \rangle \langle 1| +
\langle e_1 |s_2 \rangle | 2 \rangle \langle 2| + \langle e_1 | s_3 \rangle |3 \rangle \langle 3| =
\left( \begin{array}{ccc}
1 & 0 & 0 \\ 0 & -\frac12 & 0 \\ 0 & 0 & -\frac12 
\end{array}  \right)
$$
and 
$$
\gamma_2=\langle e_2 | s_1 \rangle |1 \rangle \langle 1| +
\langle e_2 |s_2 \rangle | 2 \rangle \langle 2| + \langle e_2 | s_3 \rangle | 3 \rangle \langle 3 | =
\left( \begin{array}{ccc}
0 & 0 & 0 \\ 0 & \frac{\sqrt{3}}{2} & 0 \\ 0 & 0 & \frac{\sqrt{3}}{2}
\end{array}  \right) .
$$
Theorem~\ref{main theorem} states that the optimal average 
codeword length per symbol via lossless coding is equal to the 
the dynamical entropy of $\Theta$ with respect to the 
partition $\gamma$ when measured using the state $\rho$,
i.e. $h(\Theta , \rho , \gamma )$. We can compute the joint 
correlations $(\rho_n)_{n=1}^\infty$ of this dynamical system
using Equation~(\ref{rhon=rhok}) to see that
$$
\rho_1 =  \frac13 |s_1 \rangle \langle s_1 | + \frac13 | s_2 \rangle \langle s_2 | + \frac13 | s_3 \rangle \langle s_3 | 
= \left( \begin{array}{cc}
\frac12 & 0 \\ 0 & \frac12
\end{array} \right) ,
$$
%since it is easy to see that 
%$$
%|s_1 \rangle \langle s_1 | =
%\left( \begin{array}{cc} 1 & 0 \\ 0 & 0 \end{array} \right) ,
%\quad | s_2 \rangle \langle s_2 | = 
%\left( \begin{array}{cc} \frac14 & -\frac{\sqrt{3}}{4} \\ 
%-\frac{\sqrt{3}}{4} & \frac34 \end{array} \right) , 
%\text{ and }
%| s_3 \rangle \langle s_3 | =
%\left( \begin{array}{cc} \frac14 & \frac{\sqrt{3}}{4} \\
%\frac{\sqrt{3}}{4}  & \frac{3}{4} \end{array} \right) .
%$$
%Similarly,
%\begin{eqnarray*}
%\rho_2 & = &
%\frac{1}{3 \cdot 2}  | s_1 \rangle \langle s_1 | \otimes | s_2 \rangle \langle s_2 | 
%+ \frac{1}{3 \cdot 2}  | s_1 \rangle \langle s_1 | \otimes | s_3 \rangle \langle s_3 | 
%+ \frac{1}{3\cdot 2}  | s_2 \rangle \langle s_2 | \otimes | s_3 \rangle \langle s_3 | \\
%&+& \frac{1}{3 \cdot 2} | s_2 \rangle \langle s_2 | \otimes | s_1 \rangle \langle s_1 | 
%+\frac{1}{3 \cdot 2} | s_3 \rangle \langle s_3 | \otimes | s_1 \rangle \langle s_1 | + 
%\frac{1}{3 \cdot 2} | s_3 \rangle \langle s_3 | \otimes | s_2 \rangle \langle s_2 |\\
%&=&  \frac{1}{6}
%\left( \begin{array}{cccc}
%\frac{9}{8} & 0 & 0 & - \frac{3}{8}\\
%0 & \frac{15}{8} & -\frac{3}{8} & 0 \\
%0 & - \frac{3}{8} & \frac{15}{8} & 0\\
%-\frac{3}{8} & 0 & 0 & \frac{9}{8}
%\end{array} \right) .
%\end{eqnarray*}
and in general 
$$
\rho_k = \frac{1}{3 \cdot 2^{k-1}}
\sum_{n_1=1}^3 \sum_{n_2,\ldots , n_k=1}^2 
|s_{n_1}\rangle \langle s_{n_1}| \otimes | s_{n_2'} \rangle \langle s_{n_2'} | \otimes \cdots \otimes 
| s_{n_k'} \rangle \langle s_{n_k'} | ,
$$
where
$$
%n_2' := (n_1 + n_2) \mod 3 , \ldots, n_k' := (n_1+n_2+\cdots +n_k)\mod 3 
n_k'=\sum_{l=1}^k n_l\mod 3
$$
and we adopt the convention that the mod~3 function takes values in the set $\{ 1, 2, 3\}$. Using Matlab we can obtain 
the following approximate values of the 
von~Neumann entropies of the above matrices:
\begin{center}
\begin{tabular}{|c|c|c|c|c|c|c|}
\hline
$k$ & 1 & 2 & 3 & 4 & 5 & 6 \\ 
\hline 
$\frac{1}{k} S(\rho_k)$ & 1 & 0.9528  & 0.9306 & 0.9169 & 0.9076 & 0.9008 \\
\hline \hline
$k$ & 7 & 8 & 9 &10 &11 &12 \\
\hline
$\frac{1}{k} S(\rho_k)$ & 0.8957 & 0.8918 & 0.8886 & 0.8861 & 0.8839 & 0.8822 \\
\hline
\end{tabular}
\end{center}

\noindent
The above decreasing numbers indicate that  the optimal average compression rate per symbol for the above quantum stochastic ensemble is strictly less than 1 qubit.

\end{Ex}

\section{Concluding Remarks}

In this paper, we developed further the theory of quantum data compression for indeterminate length quantum codes, building on the previous work of Schumacher and Westmoreland \cite{SW01} and Bellomo, et.\ al \cite{BBHZ17}. 
We presented the quantum Kraft inequality with an additional converse statement which was not present in previous works; this additional converse statement makes the statement of the quantum Kraft inequality more reminiscent of its classical counterpart. 
We also introduce the notion of stochastic ensembles and, in particular, stationary Markov ensembles which, to the best of our knowledge, have not been considered elsewhere. 
The main contributions of this work are Theorems~\ref{main theorem} and \ref{main theorem 2} which give a dynamical entropy interpretation of the optimal compression rate for stationary Markov and identically distributed stochastic ensembles, respectively, 
extending the results of Schumacher \cite{Schumacher95} and Bellomo et.\ al \cite{BBHZ17} where the quantum symbol states to be encoded were prepared in an i.i.d.\ way.  
In doing so, we give a quantum Markov chain representation of a particular open quantum random walk. 
An interesting direction for future study is the development of quantum data compression on the symmetric Fock space which is commonly used to model photons. We hope to develop this theory further in future work.

\pagebreak

\section{Appendix: Proofs of auxiliary results}

\begin{proof}[Proof of Theorem~\ref{quantum Kraft}:]
For the forward direction we adapt the proof of \cite[Subsection II.C.]{SW01} to our formalism. 
Let $U$ be a uniquely decodable quantum code with length eigenstates of the form $$U=\sum_{i=1}^d |\psi_i\rangle\langle e_i|$$ 
and let $\{\ell_i\}_{i=1}^d$ be the length eigenvalues of $U$. 
For each $n, N\in \N$, let 
$$C_n^N=\{|\psi\rangle \in H_{\mc A}^{\ten N}:|\psi\rangle =|\psi_{i_1}\rangle |\psi_{i_2}\rangle\cdots |\psi_{i_n}\rangle\text{ for some }i_1,\ldots, i_N\in\{1,\ldots, d\}\}$$ be the collection of length $N$ strings consisting of $n$-many codewords and let  
$$d_{\ell}=|\{i\in\{1,\ldots, d\}:\psi_i\in H_{\mc A}^{\ten \ell}\}|=|\{i\in\{1,\ldots,d\}: \ell_i=\ell\}|$$ be the number of length $\ell$ eigenstates of $U$, for each $\ell\in\N$. 
Then, by the unique decodability of $U$, each element of $C_n^N$ has a unique representation as a string of $n$ codewords and the elements of $C_n^N$ are pairwise orthogonal, and hence we have 
$$|C_n^N|=\sum_{\ell_{i_1}+\cdots +\ell_{i_n}=N} d_{\ell_{i_1}}d_{\ell_{i_2}}\cdots d_{\ell_{i_n}}\le 2^N.$$ 
Thus 
$$2^{-N}\sum_{\ell_{i_1}+\cdots +\ell_{i_n}=N} d_{\ell_{i_1}}d_{\ell_{i_2}}\cdots d_{\ell_{i_n}} 
= \sum_{\ell_{i_1}+\cdots +\ell_{i_n}=N} (2^{-\ell_{i_1}}d_{\ell_{i_1}})(2^{-\ell_{i_2}}d_{\ell_{i_2}})\cdots (2^{-\ell_{i_n}}d_{\ell_{i_n}})\le 1.$$ 
Set $\ds{\lmax=\max_{1\le i\le d}\{\ell_i\}}$ so that $N\le n\lmax$. 
Summing the above inequality over $N$ we obtain 
$$\sum_{\ell_{i_1},\ell_{i_2},\cdots,\ell_{i_n}=1}^{\ell_{\text{max}}}(2^{-\ell_{i_1}}d_{\ell_{i_1}})(2^{-\ell_{i_2}}d_{\ell_{i_2}})\cdots (2^{-\ell_{i_n}}d_{\ell_{i_n}})=\left( \sum_{\ell=1}^{\ell_{\text{max}}}(2^{-\ell}d_{\ell})\right)^n \le n\ell_{\text{max}}.$$ 
Notice that the left-hand side of this inequality is exponential whereas the right-hand side is linear. This implies that the left-hand side is bounded above by 1. Hence we must have that 
\begin{equation}\label{Kraft inequality}
\tr(U^\dagger 2^{-\Lambda}U)=\sum_{\ell=1}^{\ell_{\text{max}}}2^{-\ell} \tr (U^\dagger \Pi_\ell U)=\sum_{\ell=1}^{\ell_{\text{max}}}2^{-\ell}d_{\ell}\le 1.
\end{equation}
Notice that the inequality in Equation~\eqref{Kraft inequality} is simply a restatement of the classical Kraft-McMillan inequality. 

Conversely, suppose that $U$ is a linear isometry with length eigenstates satisfying the quantum Kraft-McMillan Inequality, and define $\{\ell_i\}_{i=1}^d$, $\lmax$ and $\{d_{\ell} \}_{\ell=1}^{\lmax}$ as above. 
Then 
\begin{equation*}
%\sum_{i=1}^{N}2^{-\ell_i}= 
\sum_{\ell=1}^{\ell_{\text{max}}}2^{-\ell}d_{\ell}=\sum_{\ell=1}^{\ell_{\text{max}}}2^{-\ell} \tr (U^\dagger \Pi_\ell U)= \tr(U^\dagger 2^{-\Lambda}U)\le 1 
\end{equation*}
and hence the classical Kraft-McMillan inequality is also valid. 
Thus, by the converse of the classical Kraft-McMillan Theorem, one can find a classical uniquely decodable code $C$ which has exactly $d_\ell$-many codewords of length $\ell$, for each $\ell\in\N$. The c-q scheme $\widetilde U$ constructed from this classical code $C$ has the desired properties. 
\end{proof}

\begin{proof}[Proof of Lemma~\ref{lemma1}:]
Since $\mc E_{\gamma,\Theta}= \Theta \circ \mc E_{\gamma}$ 
we have that $\mc E_{\gamma,\Theta}^\dagger = \mc E_{\gamma}^\dagger \circ \Theta^\dagger$.

Next we consider the lifting $\mc E_{\gamma}^\dagger : S_1(\C^N)\ra S_1(H_C)\ten S_1(\C^N)$ which we claim is given by the formula
\begin{equation}\label{lifting}
\mc E_{\gamma}^\dagger (\sigma)= [\gamma_i \sigma \gamma_j^*]_{i,j=1}^d=\sum_{i,j=1}^d |e_i\rangle\langle e_j|\ten \gamma_i \sigma \gamma_j^*,
\end{equation}
where we have identified $S_1(H_{\mc S})$ with $M_d$ given the matrix representation with respect to the fixed orthonormal basis $\{|e_i\rangle\}_{i=1}^d$ used in Equations~\eqref{trans exp 1} and \eqref{partition}.  
Indeed, for $[a_{i,j}]_{i,j=1}^d\in B(H_{\mc S})\ten B(\C^N)$ and $\sigma\in S_1(\C^N)$, we have 
\begin{equation*}
\begin{split}
\tr(\sigma \mc E_\gamma ([a_{i,j}]_{i,j=1}^d)) &= \tr(\sigma \sum_{i,j=1}^d \gamma_i^* a_{i,j}\gamma_j) 
=\sum_{i,j=1}^d\tr(\sigma \gamma_i^* a_{i,j}\gamma_j) \\
&= \sum_{i,j=1}^d \tr(\gamma_j\sigma \gamma_i^* a_{i,j}) 
\hspace{.04in}= \sum_{i,j=1}^d \tr(\gamma_i\sigma \gamma_j^* a_{j,i}) \\
&= \tr( \sum_{i,j=1}^d |e_i\rangle\langle e_j|\ten \gamma_i \sigma \gamma_j^* [a_{i, j}]_{i,j=1}^d) 
\end{split}
\end{equation*} 
which proves the validity of Equation~\eqref{lifting}. 
Then, for each $|m\rangle\langle m|\in S_1(\C^N)$, we have 
\begin{equation}\label{restofway}
\begin{split}
\mc E_{\gamma}^\dagger (|m\rangle\langle m|) 
&= \sum_{i,j=1}^d |e_i\rangle\langle e_j|\ten \gamma_j |m\rangle\langle m| \gamma_i^* \\
&= \sum_{i,j=1}^d |e_i\rangle\langle e_j|\ten \langle e_i, s_m\rangle |m\rangle\langle m| \langle s_m, e_j\rangle \quad\text{by Equation~\eqref{partition}} \\
&= \left|\sum_{i=1}^d \langle e_i, s_m\rangle e_i\right\rangle \left\langle\sum_{j=1}^d \langle e_j, s_m\rangle e_j \right|\ten |m\rangle\langle m| \\
&= |s_m\rangle\langle s_m|\ten |m\rangle\langle m| = |s_m'\rangle\langle s_m'|.
\end{split}
\end{equation}

Combining Equations~\eqref{E:MarkovTheta} and \eqref{restofway}, for each $|n\rangle\langle n|\in S_1(\C^N)$, we have 
\begin{equation}\label{lastpart}
\begin{split}
\mc E_{\gamma,\Theta}^\dagger (|n\rangle\langle n|) 
&= \mc E_{\gamma}^\dagger (\sum_{m=1}^N p_{m,n} |m\rangle\langle m|)\quad\text{by Equation~\eqref{E:MarkovThetaDagger}} \\ 
&= \sum_{m=1}^N p_{m,n} |s_m'\rangle\langle s_m'|\quad\text{by Equation~\eqref{restofway}.}
\end{split}
\end{equation}

For the moreover statement, we have 
\begin{equation*}\label{inv state}
\begin{split}
\mc E_{\gamma,\Theta}^\dagger (\rho) 
&= \sum_{n=1}^N p_n \mc E_{\gamma,\Theta}^\dagger (|n\rangle\langle n|) \quad\text{by Equation~\eqref{E:MarkovRho}} \\ 
&= \sum_{n,m=1}^N p_n p_{m,n} |s_m'\rangle\langle s_m'| \quad\text{by Equation~\eqref{lastpart}} \\ 
&= \sum_{m=1}^N p_m |s_m'\rangle\langle s_m'|= \rho' \quad\text{since $\X$ is stationary.}
\end{split}
\end{equation*}
\end{proof}

\begin{proof}[Proof of Lemma~\ref{lemma2}:]
It is easy to see that for each $\bar m , \bar n \in \{ 1, \ldots , N \}^+$ (i.e. \ $| \bar m \rangle, | \bar n \rangle $ belong in the standard 
orthonormal basis of $H$), we have 
	\begin{equation}\label{stochastic halfway}
	\Theta^\dagger (|\bar m \rangle\langle \bar n|) = 
\delta_{\bar m , \bar n}	\sum_{k=1}^N p(k|\bar n) |\bar n\circ k\rangle\langle \bar n\circ k| .
	\end{equation}

	Next we consider the lifting $\mc E_{\gamma}^\dagger : S_1(H)\ra S_1(H_{\mc S})\ten S_1(H)$ which (by Equation~\eqref{lifting}) is given by the formula
	$$
	\mc E_{\gamma}^\dagger (\sigma)= [\gamma_i \sigma \gamma_j^*]_{i,j=1}^d=\sum_{i,j=1}^d |e_i\rangle\langle e_j|\ten \gamma_i \sigma \gamma_j^*,
	$$ 
	where we have identified $S_1(H_{\mc S})$ with $M_d$ given the matrix representation with respect to the fixed orthonormal basis $\{|e_i\rangle\}_{i=1}^d$ of the Hilbert space 
	$H_{\mc S}$.
	Then, for each  $\bar n \in \{ 1, \ldots , N \}^+$, we have 
	\begin{equation}\label{stochastic restofway}
	\begin{split}
	\mc E_{\gamma}^\dagger ( |\bar n\rangle\langle \bar n|) 
	&= \sum_{i,j=1}^d |e_i\rangle\langle e_j|\ten \gamma_i  |\bar n\rangle\langle \bar n| \gamma_j^* \\
	&= \sum_{i,j=1}^d |e_i\rangle\langle e_j|\ten \langle e_i | s_{\final (\bar n)}\rangle |\bar n\rangle\langle \bar n| \langle s_{\final (\bar n)} | e_j\rangle \\
	&= \left|\sum_{i=1}^d \langle e_i| s_{\final( \bar n )}\rangle e_i\right\rangle \left\langle\sum_{j=1}^d \langle e_j | s_{\final( \bar n) }\rangle e_j \right|\ten |\bar n\rangle\langle \bar n| \\
	&= |s_{\final (\bar n )}\rangle\langle s_{\final (\bar n )}|\ten  |\bar n\rangle\langle \bar n| = |s_{\bar n}'\rangle\langle s_{\bar n}'|,
	\end{split}
	\end{equation}
	where we used Equation~\eqref{E:GeneralGamma} in the second equality.

	Combining Equations~\eqref{stochastic halfway} and \eqref{stochastic restofway}, for each   $\bar n \in \{ 1, \ldots , N \}^+$, we have 
	\begin{equation}\label{stochastic lastpart}
	\begin{split}
	\mc E_{\gamma,\Theta}^\dagger(|\bar n\rangle\langle \bar n|) 
	&= \mc E_{\gamma}^\dagger\left(\sum_{k=1}^N p(k|\bar n) |\bar n\circ k\rangle\langle \bar n\circ k|\right) \\
	&= \sum_{k=1}^N p(k|\bar n) |s_{\bar n\circ k}'\rangle\langle s_{\bar n\circ k}'|. 
	\end{split}
	\end{equation}
	
	For the moreover statement, we have 
	\begin{equation*}\label{stochastic inv state}
	\begin{split}
	\mc E_{\gamma,\Theta}^\dagger (\rho)
	&= \sum_{k=1}^N p(k|\emptyset) |s_{k}'\rangle\langle s_{k}'| \quad\text{by Equation~\eqref{stochastic lastpart}} \\ 
	&= \sum_{k=1}^N p(k) |s_{k}'\rangle\langle s_{k}'|, 
	\end{split}
	\end{equation*}
	where we again used the convention that $p(k|\emptyset)=p(k)$, for all $k\in\{1,\ldots, N\}$, in the last equality. 
\end{proof}

\end{document}